\numberwithin{equation}{section}
\newtheorem{thm}{Theorem}[section]
\newtheorem{lem}[thm]{Lemma}
\newtheorem{prop}[thm]{Proposition}
\newtheorem{defn}[thm]{Definition}
\newtheorem{rem}[thm]{Remark}
\numberwithin{equation}{section}
\begin{document}
\newcommand{\beqa}{\begin{eqnarray}}
\newcommand{\eeqa}{\end{eqnarray}}
\newcommand{\thmref}[1]{Theorem~\ref{#1}}
\newcommand{\secref}[1]{Sect.~\ref{#1}}
\newcommand{\lemref}[1]{Lemma~\ref{#1}}
\newcommand{\propref}[1]{Proposition~\ref{#1}}
\newcommand{\corref}[1]{Corollary~\ref{#1}}
\newcommand{\remref}[1]{Remark~\ref{#1}}
\newcommand{\er}[1]{(\ref{#1})}
\newcommand{\nc}{\newcommand}
\newcommand{\rnc}{\renewcommand}

\nc{\cal}{\mathcal}

\nc{\goth}{\mathfrak}
\rnc{\bold}{\mathbf}
\renewcommand{\frak}{\mathfrak}
\renewcommand{\Bbb}{\mathbb}

\nc{\trr}{\triangleright}
\nc{\trl}{\triangleleft}

\newcommand{\id}{\text{id}}
\nc{\Cal}{\mathcal}
\nc{\Xp}[1]{X^+(#1)}
\nc{\Xm}[1]{X^-(#1)}
\nc{\on}{\operatorname}
\nc{\ch}{\mbox{ch}}
\nc{\Z}{{\bold Z}}
\nc{\J}{{\mathcal J}}
\nc{\C}{{\bold C}}
\nc{\Q}{{\bold Q}}
\nc{\oC}{{\widetilde{C}}}
\nc{\oc}{{\tilde{c}}}
\nc{\ocI}{ \overline{\cal I}}
\nc{\og}{{\tilde{\gamma}}}
\nc{\lC}{{\overline{C}}}
\nc{\lc}{{\overline{c}}}
\nc{\Rt}{{\tilde{R}}}

\nc{\tW}{{\textsf{W}}}
\nc{\tG}{{\textsf{G}}}

\nc{\cL}{{\cal{L}}}
\nc{\cK}{{\cal{K}}}

\nc{\tw}{{\textsf{w}}}
\nc{\tg}{{\textsf{g}}}

\nc{\tx}{{\textsf{x}}}
\nc{\tho}{{\textsf{h}}}
\nc{\tk}{{\textsf{k}}}
\nc{\tep}{{\bf{\cal E}}}

\nc{\te}{{\textsf{e}}}
\nc{\tf}{{\textsf{f}}}
\nc{\tK}{{\textsf{K}}}

\nc{\odel}{{\overline{\delta}}}

\def\pr#1{\left(#1\right)_\infty}  %infinite product

\renewcommand{\P}{{\mathcal P}}
\nc{\N}{{\Bbb N}}
\nc\beq{\begin{equation}}
\nc\enq{\end{equation}}
\nc\lan{\langle}
\nc\ran{\rangle}
\nc\bsl{\backslash}
\nc\mto{\mapsto}
\nc\lra{\leftrightarrow}
\nc\hra{\hookrightarrow}
\nc\sm{\smallmatrix}
\nc\esm{\endsmallmatrix}
\nc\sub{\subset}
\nc\ti{\tilde}
\nc\nl{\newline}
\nc\fra{\frac}
\nc\und{\underline}
\nc\ov{\overline}
\nc\ot{\otimes}

\nc\ochi{\overline{\chi}}
\nc\bbq{\bar{\bq}_l}
\nc\bcc{\thickfracwithdelims[]\thickness0}
\nc\ad{\text{\rm ad}}
\nc\Ad{\text{\rm Ad}}
\nc\Hom{\text{\rm Hom}}
\nc\End{\text{\rm End}}
\nc\Ind{\text{\rm Ind}}
\nc\Res{\text{\rm Res}}
\nc\Ker{\text{\rm Ker}}
\rnc\Im{\text{Im}}
\nc\sgn{\text{\rm sgn}}
\nc\tr{\text{\rm tr}}
\nc\Tr{\text{\rm Tr}}
\nc\supp{\text{\rm supp}}
\nc\card{\text{\rm card}}
\nc\bst{{}^\bigstar\!}
\nc\he{\heartsuit}
\nc\clu{\clubsuit}
\nc\spa{\spadesuit}
\nc\di{\diamond}
\nc\cW{\cal W}
\nc\cG{\cal G}
\nc\cZ{\cal Z}
\nc\ocW{\overline{\cal W}}
\nc\ocZ{\overline{\cal Z}}
\nc\al{\alpha}
\nc\bet{\beta}
\nc\ga{\gamma}
\nc\de{\delta}
\nc\ep{\epsilon}
\nc\io{\iota}
\nc\om{\omega}
\nc\si{\sigma}
\rnc\th{\theta}
\nc\ka{\kappa}
\nc\la{\lambda}
\nc\ze{\zeta}

\nc\vp{\varpi}
\nc\vt{\vartheta}
\nc\vr{\varrho}

\nc\odelta{\overline{\delta}}
\nc\Ga{\Gamma}
\nc\De{\Delta}
\nc\Om{\Omega}
\nc\Si{\Sigma}
\nc\Th{\Theta}
\nc\La{\Lambda}

\nc\boa{\bold a}
\nc\bob{\bold b}
\nc\boc{\bold c}
\nc\bod{\bold d}
\nc\boe{\bold e}
\nc\bof{\bold f}
\nc\bog{\bold g}
\nc\boh{\bold h}
\nc\boi{\bold i}
\nc\boj{\bold j}
\nc\bok{\bold k}
\nc\bol{\bold l}
\nc\bom{\bold m}
\nc\bon{\bold n}
\nc\boo{\bold o}
\nc\bop{\bold p}
\nc\boq{\bold q}
\nc\bor{\bold r}
\nc\bos{\bold s}
\nc\bou{\bold u}
\nc\bov{\bold v}
\nc\bow{\bold w}
\nc\boz{\bold z}

\nc\ba{\bold A}
\nc\bb{\bold B}
\nc\bc{\bold C}
\nc\bd{\bold D}
\nc\be{\bold E}
\nc\bg{\bold G}
\nc\bh{\bold H}
\nc\bi{\bold I}
\nc\bj{\bold J}
\nc\bk{\bold K}
\nc\bl{\bold L}
\nc\bm{\bold M}
\nc\bn{\bold N}
\nc\bo{\bold O}
\nc\bp{\bold P}
\nc\bq{\bold Q}
\nc\br{\bold R}
\nc\bs{\bold S}
\nc\bt{\bold T}
\nc\bu{\bold U}
\nc\bv{\bold V}
\nc\bw{\bold W}
\nc\bz{\bold Z}
\nc\bx{\bold X}

\nc\ca{\mathcal A}
\nc\cb{\mathcal B}
\nc\cc{\mathcal C}
\nc\cd{\mathcal D}
\nc\ce{\mathcal E}
\nc\cf{\mathcal F}
\nc\cg{\mathcal G}
\rnc\ch{\mathcal H}
\nc\ci{\mathcal I}
\nc\cj{\mathcal J}
\nc\ck{\mathcal K}
\nc\cl{\mathcal L}
\nc\cm{\mathcal M}
\nc\cn{\mathcal N}
\nc\co{\mathcal O}
\nc\cp{\mathcal P}
\nc\cq{\mathcal Q}
\nc\car{\mathcal R}
\nc\cs{\mathcal S}
\nc\ct{\mathcal T}
\nc\cu{\mathcal U}
\nc\cv{\mathcal V}
\nc\cz{\mathcal Z}
\nc\cx{\mathcal X}
\nc\cy{\mathcal Y}

\nc\e[1]{E_{#1}}
\nc\ei[1]{E_{\delta - \alpha_{#1}}}
\nc\esi[1]{E_{s \delta - \alpha_{#1}}}
\nc\eri[1]{E_{r \delta - \alpha_{#1}}}
\nc\ed[2][]{E_{#1 \delta,#2}}
\nc\ekd[1]{E_{k \delta,#1}}
\nc\emd[1]{E_{m \delta,#1}}
\nc\erd[1]{E_{r \delta,#1}}

\nc\ef[1]{F_{#1}}
\nc\efi[1]{F_{\delta - \alpha_{#1}}}
\nc\efsi[1]{F_{s \delta - \alpha_{#1}}}
\nc\efri[1]{F_{r \delta - \alpha_{#1}}}
\nc\efd[2][]{F_{#1 \delta,#2}}
\nc\efkd[1]{F_{k \delta,#1}}
\nc\efmd[1]{F_{m \delta,#1}}
\nc\efrd[1]{F_{r \delta,#1}}

\nc\fa{\frak a}
\nc\fb{\frak b}
\nc\fc{\frak c}
\nc\fd{\frak d}
\nc\fe{\frak e}
\nc\ff{\frak f}
\nc\fg{\frak g}
\nc\fh{\frak h}
\nc\fj{\frak j}
\nc\fk{\frak k}
\nc\fl{\frak l}
\nc\fm{\frak m}
\nc\fn{\frak n}
\nc\fo{\frak o}
\nc\fp{\frak p}
\nc\fq{\frak q}
\nc\fr{\frak r}
\nc\fs{\frak s}
\nc\ft{\frak t}
\nc\fu{\frak u}
\nc\fv{\frak v}
\nc\fz{\frak z}
\nc\fx{\frak x}
\nc\fy{\frak y}

\nc\fA{\frak A}
\nc\fB{\frak B}
\nc\fC{\frak C}
\nc\fD{\frak D}
\nc\fE{\frak E}
\nc\fF{\frak F}
\nc\fG{\frak G}
\nc\fH{\frak H}
\nc\fJ{\frak J}
\nc\fK{\frak K}
\nc\fL{\frak L}
\nc\fM{\frak M}
\nc\fN{\frak N}
\nc\fO{\frak O}
\nc\fP{\frak P}
\nc\fQ{\frak Q}
\nc\fR{\frak R}
\nc\fS{\frak S}
\nc\fT{\frak T}
\nc\fU{\frak U}
\nc\fV{\frak V}
\nc\fZ{\frak Z}
\nc\fX{\frak X}
\nc\fY{\frak Y}
\nc\tfi{\ti{\Phi}}
\nc\bF{\bold F}
\rnc\bol{\bold 1}

\nc\ua{\bold U_\A}

%%%%%%%%%%%%%%%%%%%%%%%%%%%%%%%%%%%%%%%%%%%%%%%%%%%%%%
\nc\qinti[1]{[#1]_i}
\nc\q[1]{[#1]_q}
\nc\xpm[2]{E_{#2 \delta \pm \alpha_#1}}  %\xpm{j}{l}
\nc\xmp[2]{E_{#2 \delta \mp \alpha_#1}}
\nc\xp[2]{E_{#2 \delta + \alpha_{#1}}}
\nc\xm[2]{E_{#2 \delta - \alpha_{#1}}}
\nc\hik{\ed{k}{i}}
\nc\hjl{\ed{l}{j}}
\nc\qcoeff[3]{\left[ \begin{smallmatrix} {#1}& \\ {#2}& \end{smallmatrix}
\negthickspace \right]_{#3}}
\nc\qi{q}
\nc\qj{q}

\nc\ufdm{{_\ca\bu}_{\rm fd}^{\le 0}}

%%%%%%%%%%%%%%%%%%%%%%%%%%%%%%%%%%%%%%%%%%%%%%%%%%%%%%

%\nc\rtimes
\nc\isom{\cong} 

\nc{\pone}{{\Bbb C}{\Bbb P}^1}
\nc{\pa}{\partial}
\def\H{\mathcal H}
\def\L{\mathcal L}
\nc{\F}{{\mathcal F}}
\nc{\Sym}{{\goth S}}
\nc{\A}{{\mathcal A}}
\nc{\arr}{\rightarrow}
\nc{\larr}{\longrightarrow}

\nc{\ri}{\rangle}
\nc{\lef}{\langle}
\nc{\W}{{\mathcal W}}
\nc{\uqatwoatone}{{U_{q,1}}(\su)}
\nc{\uqtwo}{U_q(\goth{sl}_2)}
\nc{\dij}{\delta_{ij}}
\nc{\divei}{E_{\alpha_i}^{(n)}}
\nc{\divfi}{F_{\alpha_i}^{(n)}}
\nc{\Lzero}{\Lambda_0}
\nc{\Lone}{\Lambda_1}
\nc{\ve}{\varepsilon}
\nc{\bepsilon}{\bar{\epsilon}}
\nc{\bak}{\bar{k}}
\nc{\phioneminusi}{\Phi^{(1-i,i)}}
\nc{\phioneminusistar}{\Phi^{* (1-i,i)}}
\nc{\phii}{\Phi^{(i,1-i)}}
\nc{\Li}{\Lambda_i}
\nc{\Loneminusi}{\Lambda_{1-i}}
\nc{\vtimesz}{v_\ve \otimes z^m}

\nc{\asltwo}{\widehat{\goth{sl}_2}}
\nc\ag{\widehat{\goth{g}}}  
\nc\teb{\tilde E_\boc}
\nc\tebp{\tilde E_{\boc'}}

\newcommand{\LR}{\bar{R}}
\newcommand{\eeq}{\end{equation}}
\newcommand{\ben}{\begin{eqnarray}}
\newcommand{\een}{\end{eqnarray}}

\title[Freidel-Maillet presentations of $U_q(sl_2)$]{Freidel-Maillet type presentations of $U_q(sl_2)$} 
%\dedicatory{}
\author{Pascal Baseilhac}
\address{Institut Denis-Poisson CNRS/UMR 7013 - Universit\'e de Tours - Universit\'e d'Orl\'eans
Parc de Grammont, 37200 Tours, 
FRANCE}
\email{pascal.baseilhac@idpoisson.fr}

\begin{abstract} A unified framework for the Chevalley and equitable presentation of $U_q(sl_2)$ is introduced. It is given in terms of a system of Freidel-Maillet type equations satisfied by a pair of quantum K-operators $\cK^\pm$, whose entries are expressed in terms of either Chevalley or equitable generators. The Hopf algebra structure is reconsidered in light of this unified framework, and interwining relations for each pair of $\cK^\pm$ are obtained. A K-operator solving a spectral parameter dependent Freidel-Maillet type equation is also considered. Different specializations of this K-operator are shown to admit a decomposition in terms of $\cK^\pm$ of Chevalley or equitable type. Explicit examples of K-matrices without/with spectral parameter are derived by specializing the K-operators previously obtained.     
\end{abstract}

\maketitle

\vskip -0.5cm

{\small MSC:\ 16T25;\ 17B37;\ 81R50.}

{{\small  {\it \bf Keywords}:  $U_q(sl_2)$; Equitable presentation; FRT presentation; Freidel-Maillet algebras}}
\vspace{5mm}

\section{Introduction}
The Faddeev-Reshetikhin-Takhtajan (FRT) presentation of quantum algebras \cite{FRT87,FRT89}  has played a central role in the development of integrable quantum field theory and statistical mechanics, providing a powerful framework for studying the properties of the system under consideration (spectrum, scattering data, correlation functions,...).  For any quantum algebra $U_q(g)$, given a R-matrix satisfying
the Yang-Baxter equation
\begin{align}
R_{12}R_{13}R_{23}=R_{23}R_{13}R_{12}\ ,\label{YB0}
\end{align}
where the standard notation $R_{ij}\in \mathrm{End}({\cal V}_i\otimes {\cal V}_j)$ is used, a FRT type presention  %\textcolor{blue}{the FRT presentation}
 gives a matrix realization  of $U_q(g)$ of the following form. It consists of a pair of quantum L-operators with entries $(\cL^\pm_{i,j})_{1\leq i,j \leq n}\in U_q(g)$ obeying a system of exchange relations called the Yang-Baxter algebra. The defining relations can be written as 
%\textcolor{blue}{take the form}:
%
\beqa
\cL^\pm_{i,i}\cL^\mp_{i,i}=1 \ ,\qquad 
R\cL_1^\pm \cL_2^\pm = \cL_2^\pm \cL_1^\pm R \ ,\qquad 
R\cL_1^+ \cL_2^- = \cL_2^- \cL_1^+ R \ ,\label{YBAint}
\eeqa
where the shorthand notation $\cL_1=\cL \otimes {\mathbb I}$, $\cL_2= {\mathbb I} \otimes \cL$ is used. The extension of (\ref{YBAint}) to any affine Kac-Moody algebra $U_q(\widehat{g})$ has been extensively studied, starting from \cite{RS,DF93}, until the recent achievements \cite{JLM,JLMBD,LP21}. In this case, the corresponding Yang-Baxter equation (\ref{YB0}) depends on an indeterminate  $u$, called the spectral parameter in the mathematical physics literature. In this latter context,  corresponding L-operators and R-matrix solutions are the key ingredients for the  construction of transfer matrices that provide generating functions for mutually commuting quantities in quantum integrable systems. Furthermore, powerful standard techniques such as the Bethe ansatz and separation of variables essentially depend on the presentation (\ref{YBAint}).
 For a recent review, see e.g. \cite{Slav18}. Solutions of (\ref{YB0}) also arise in the context of knot theory and quotients of the braid group, see e.g. \cite[Chapter 15]{CPb}.\vspace{1mm} 

In all these works, it is important to point out that besides the presentation (\ref{YBAint}), the quantum algebra $U_q(g)$ always arises in the so-called {\it Chevalley} (or Drinfeld-Jimbo) presentation with generators $\{E_i,F_i,K^{\pm 1}_i,i=1,...,n\}$ \cite{Jim86,Dr0}. However, in recent years
a different presentation for $U_q(g)$ has been introduced, called {\it equitable}, with generators $\{X^{\pm 1}_i,Y_i,Z_i,i=1,...,n\}$ \cite{ITW,Ter05}. Since \cite{ITW}, the equitable presentation of $U_q(sl_2)$  has been studied from various perspectives: 
  algebra \cite{Tereq11,Tereq15b,GVZeq15,FJeq16,LXJeq19}, combinatorics  and representation theory \cite{Aeq11,MReq12,Feq13,HGeq13,IRTeq13,BCTeq14,Tereq14,Yeq15,Neq15,Tereq15a,SGeq16,Tereq20} for instance. Also, note that irreducible finite dimensional representations have been studied in details in \cite{ITW,Ter09}.	In addition, let us mention that the subalgebra generated by $\{Y,Z\}$ of the equitable $U_q(sl_2)$ is isomorphic to the non-homogeneous Borel subalgebra considered in \cite{LV17,Vo,LV19}.  Importantly, contrary to the Chevalley presentation which fits with the FRT framework, up to now a similar set up for the equitable presentation has remained unknown.\vspace{1mm} 

In this paper, a unified framework for the Chevalley and equitable presentation of $U_q(sl_2)$ is proposed. The basic data consists of two R-matrices $R$, $R^{(0)}$, and a pair of K-operators with $(\cK^\pm_{i,j})_{1\leq i,j \leq 2}\in U_q(sl_2)$  - written in terms of the Chevalley or equitable generators - satisfying a system of Freidel-Maillet type equations\footnote{See also \cite{NC92,Bab92,KS}. Note that equations of the form (\ref{FMint}) arise naturally in the context of braided Yang-Baxter algebras. See e.g. \cite{Parm}.} \cite{FM91}.
In both Chevalley and equitable cases, the defining relations of $U_q(sl_2)$  can be written in the form (see Theorem \ref{thm:isoUslFM}):
\beqa
(\cK^\pm_{1,1})^{\pm 1}(\cK^\pm_{1,1})^{\mp 1}=1 \ ,\qquad 
R\cK_1^\pm R^{(0)}\cK_2^\pm = \cK_2^\pm  R^{(0)}\cK_1^\pm R \ ,\qquad 
R\cK_1^+ R^{(0)}\cK_2^- = \cK_2^- R^{(0)}\cK_1^+ R \ .\label{FMint}
\eeqa
This matrix realization is called a presentation of Freidel-Maillet type\footnote{The family of algebras introduced by L. Freidel and J-M. Maillet \cite{FM91} generalizes the concept of Yang-Baxter \cite{FRT87,FRT89} and reflection algebras \cite{Skly88}. They are of the general form \cite[eq. (14)]{FM91}
\beqa
R {\cal K}_1 R' {\cal K}_2 = {\cal K}_2 R'' {\cal K}_1 R'''
\eeqa
where $R,R',R'',R'''$ are non-trivial and subject to certain relations.}.
The Hopf algebra structure is investigated in light of this presentation, and an interpretation of K-operators as intertwiners of certain $U_q(sl_2)$-modules is discussed.  Also, it is found that the simplest example of non-homogeneous Borel subalgebra of $U_q(sl_2)$ considered in \cite{LV17,Vo,LV19} fits in this framework. Some consequences for the spectral parameter dependent Freidel-Maillet equation (\ref{RE}) and corresponding K-operator solutions are studied.  \vspace{1mm} 

The text is organized as follows. Section \ref{sec2} collects the basic material for our purpose. The quantum algebra $U_q(gl_2)$ is recalled,  as well as the Chevalley and equitable presentations of $U_q(sl_2)$. Section \ref{sec3} is devoted to the Freidel-Maillet type presentation of $U_q(sl_2)$.
 It is naturally suggested from a detailed analysis of K-operator solutions built from $U_q(gl_2)$ L-operators introduced at the begining of the same section. The Hopf algebra structure is discussed, and a set of intertwining relations satisfied by the K-operators are obtained. A Freidel-Maillet type presentation for the non-homogeneous Borel subalgebra of $U_q(sl_2)$ follows. 
In Section \ref{sec4}, the K-operator (\ref{Kg}) solution of the spectral parameter dependent Freidel-Maillet equation (\ref{RE}) recently obtained in \cite{Bas20} is  considered in light of (\ref{FMint}). This K-operator characterizes a quotient of the alternating presentation for $U_q(\widehat{sl_2})$ introduced in \cite{Ter18}. It is shown that the class of solutions generated from $\cK^\pm$ follow from specializations to $U_q(sl_2)$ of this K-operator. 
%In this case, the entries of the K-operators are in $U_q(sl_2)\otimes {\mathbb K}[u,u^{-1}]$. It is shown that the spectral parameter dependent K-operators decompose in terms of $\cK^\pm$ satisfying (\ref{FMint}), through a `Baxterization' procedure. 
For the simplest irreducible finite dimensional representations in the Chevalley or equitable presentation of $U_q(sl_2)$, explicit examples of spectral parameter dependent K-matrices are obtained. They satisfy a Freidel-Maillet equation of the form:
\begin{align}
R_{12}\cK_{13}R^{(0)}_{12}\cK_{23}=\cK_{23}R^{(0)}_{12}\cK_{13}R_{12}\ ,\label{FMeqint}
\end{align}
and naturally generate quantum integrable models associated with $U_q(sl_2)$. For the Chevalley case, some examples have already been studied in the literature, see e.g. \cite{H94,Ku95,FR02,LIL14}.  \vspace{1mm}

Let us briefly comment on some perspectives and related subjects. Besides potential applications to quantum integrable systems, it is expected that the presentation of Freidel-Maillet type  for $U_q(sl_2)$ (\ref{FMint}) generalizes to any higher rank Lie algebra, thus prodiving a unified framework for the Chevalley and equitable presentation of $U_q(g)$. Also, let us point out that the equitable presentation of $U_q(\widehat{sl_2})$ is known \cite{IT07}.  So, it is natural to ask for the corresponding Freidel-Maillet type presentation of $U_q(\widehat{sl_2})$. This is closely related with the subject of \cite{Bas20,Bas21}. Note that the present work is also motivated by the theory of quantum symmetric pairs (see e.g. \cite{Letzter,MRS,Kolb,BW,LV19}), reflection algebras \cite{Skly88,KS,DKM03,Ko0,KoS,BKo15,RV16,Kolb17,AV}, the q-Onsager algebra \cite{Ter99,B04} and its alternating central extension, see \cite{Ter21} and references therein. As a last comment, for a Hopf algebra recall that the quasi-cocommutativity property is characterized by the universal $R$-matrix and the concept of quasi-triangularity \cite{Dr0}. In view of the results of subsection \ref{sec:intert} it seems desirable to investigate the universal K-matrix of equitable type associated with a universal version of (\ref{FMeqint}), and clarify its interpretation with respect to $U_q(sl_2)$-modules. Some of these problems will be considered elsewhere.    

\vspace{2mm}

%\begin{notation}
{\bf Notation 1.1.}
{\it
 Let ${\mathbb K}$ denote an algebraically closed
field of characteristic $0$. ${\mathbb K}(q)$ denotes the field of rational functions in an indeterminate $q$. The $q$-commutator $\big[X,Y\big]_q=qXY-q^{-1}YX$ is introduced. We denote $[X,Y]=[X,Y]_{q=1}$.  We also denote the $q$-number $[n]_q= (q^n-q^{-n})/(q-q^{-1})$.} \vspace{2mm}

%\end{notation}

\section{The quantum algebras $U_q(gl_2)$ and $U_q(sl_2)$}\label{sec2}
In this section, we recall the definitions and basic properties of the Chevalley presentations of $U_q(gl_2)$ and $U_q(sl_2)$, as well as the equitable presentation of $U_q(sl_2)$ recently introduced in \cite{ITW}. For each presentation, the comultiplication,  counit and antipode that ensure the Hopf algebra structure are given. Central elements are also displayed. For further discussion, irreducible finite dimensional representations %for $q$ not a root of unity 
\cite{CPb,ITW,Ter09} are recalled.  
\subsection{The Chevalley presentation of $U_q(gl_2)$} We refer the reader to  \cite{Jim86}.
 %The {\it Chevalley} presentation of the quantum algebra $U_q(gl_2)$ is now recalled.   
%
\begin{defn} 
\label{def:uqgl}
$U_q(gl_2)$
is the unital associative ${\mathbb K}(q)$-algebra 
with {\it Chevalley}
generators $E,F,K^{\pm 1}_1,K_2^{\pm 1}$ satisfying the following relations:
\begin{eqnarray}
K_iK_i^{-1} &=& K_i^{-1}K_i =  1\ , \quad i=1,2\ ,\label{eqgl1} \\
K_iK_j&=&K_jK_i \ ,\label{eqgl2}\\
K_1EK_1^{-1} &=& qE,\qquad \quad K_1FK_1^{-1}=q^{-1}F\ ,\label{eqgl3}\\
K_2EK_2^{-1} &=& q^{-1}E,\qquad K_2FK_2^{-1}=qF\ ,\label{eqgl4}\\
\big[E,F\big]&=& \frac{K_1K_2^{-1}-K_1^{-1}K_2}{q-q^{-1}}\ .\label{eqgl5}
\end{eqnarray}
\end{defn}
In the text, the multiplication and unit map of $U_q(gl_2)$ will be denoted $\mu: U_q(gl_2) \otimes  U_q(gl_2)  \rightarrow U_q(gl_2) $ and $\eta: {\mathbb K} \rightarrow U_q(gl_2)$, respectively.\vspace{1mm}

The quantum algebra $U_q(gl_2)$  is endowed with the following Hopf algebra structure. As a coalgebra, it is equipped with the comultiplication $\Delta: U_q(gl_2) \rightarrow U_q(gl_2) \otimes U_q(gl_2)$ such that
\begin{eqnarray}
&& \Delta(E) = E \otimes 1+K_1K_2^{-1} \otimes E\ ,\quad \qquad 
\Delta(F) = F \otimes K_1^{-1}K_2+ 1 \otimes F\ ,\qquad  \quad \Delta(K_i) = K_i \otimes K_i\ ,\label{cpgl2}
\end{eqnarray}
and the counit $\varepsilon: U_q(gl_2) \rightarrow {\mathbb K}$ such that
\begin{eqnarray}
\varepsilon(E)=0\ , \qquad \qquad \varepsilon(F)=0\ ,
\qquad \qquad 
\varepsilon(K_i)=1\ .\label{cugl2}
\end{eqnarray}
As a Hopf algebra, in addition it is equipped with the antipode $S: U_q(gl_2)\rightarrow U_q(gl_2)$ such that
\begin{eqnarray}
S(E)= -K_1^{-1}K_2E\ ,
\qquad \qquad 
S(F)= -FK_1K_2^{-1}\ ,
\qquad \qquad 
S(K_i) = K_i^{-1}\ .\label{agl2}
\end{eqnarray}

The center of $U_q(gl_2)$ is generated by two central elements, given by:
\beqa
\Omega_{1,c} &=& K_1K_2 \ ,\label{Centgl21c}\\
\Omega_{2,c} &=& \frac{q^{-1}K_1K_2^{-1}+ q K_1^{-1}K_2}{(q-q^{-1})^2} + EF = \frac{qK_1K_2^{-1}+ q^{-1}K_1^{-1}K_2}{(q-q^{-1})^2} + FE .\label{Centgl22c}
\eeqa

\subsection{The Chevalley and equitable presentations of $U_q(sl_2)$}
Two different presentations of the quantum algebra $U_q(sl_2)$ in terms of generators and relations have been introduced in the literature. 
The Chevalley presentation of $U_q(sl_2)$  is first recalled \cite{KR83,Sk85}.  For the notations, see e.g. \cite[p.~9]{Jantzen}.
\begin{defn} 
\label{def:uq}
$U_q(sl_2)$
is the unital associative ${\mathbb K}(q)$-algebra 
with {\it Chevalley} generators $E,F,K, K^{-1}$ satisfying the following relations:
\begin{eqnarray}
KK^{-1} &=& 
K^{-1}K =  1,
\label{eq1}
\\
KEK^{-1} &=& q^2E,\qquad  KFK^{-1}=q^{-2}F,
\label{eq3}
\\
\big[E,F\big]&=& \frac{K-K^{-1}}{q-q^{-1}}.
\label{eq4}
\end{eqnarray}
\end{defn}

The quantum algebra $U_q(sl_2)$  has the following Hopf algebra structure.  The comultiplication $\Delta: U_q(sl_2) \rightarrow U_q(sl_2) \otimes U_q(sl_2)$ is such that
\begin{eqnarray}
&& \Delta(E) = E \otimes 1+K \otimes E\ ,\quad
\Delta(F) = F \otimes K^{-1}+ 1 \otimes F\ , \quad \Delta(K) = K \otimes K\ , \label{copc}
\end{eqnarray}
The counit $\varepsilon: U_q(sl_2) \rightarrow {\mathbb K}(q)$ is such that
\begin{eqnarray}
\varepsilon(E)=0\ , \qquad \qquad \varepsilon(F)=0\ ,
\qquad \qquad 
\varepsilon(K)=1\  \label{cuc}
\end{eqnarray}
and the antipode $S: U_q(sl_2)\rightarrow U_q(sl_2)$ is such that
\begin{eqnarray}
S(E)= -K^{-1}E\ ,
\qquad \qquad 
S(F)= -FK\ ,
\qquad \qquad 
S(K) = K^{-1}\ . \label{ac}
\end{eqnarray}
Also, note the automorphism: 
\beqa
\theta: \qquad E \leftrightarrow F\ , \qquad K \leftrightarrow K^{-1}\ .\label{auto}
\eeqa

The central element of $U_q(sl_2)$  is the so-called Casimir operator, given by:
\beqa
\Omega_c = \frac{q^{-1}K+ q K^{-1}}{(q-q^{-1})^2} + EF = \frac{qK+ q^{-1}K^{-1}}{(q-q^{-1})^2} + FE .\label{Centc}
\eeqa

Note that in the Chevalley presentation, $U_q(sl_2)$ is the quotient of $U_q(gl_2)$ by the relation $K_1K_2=1$  (see (\ref{Centgl21c})). It is is generated by
\beqa
E \ ,\qquad F \ ,\qquad K=K_1K_2^{-1}\ ,\qquad K^{-1}=K_1^{-1}K_2 \  .\label{isoglsl}
\eeqa
%
%Then,  from (\ref{Centgl22c}) one gets (\ref{Centc}) using the identification (\ref{isoglsl}). \vspace{1mm}

In further sections, we will consider examples of irreducible finite dimensional representations of $U_q(sl_2)$. For the Chevalley presentation, see e.g. \cite{CPb}. Choose ${\mathbb K}={\mathbb C}$. Define the vector space  $V^{(s)}$ of dimension $2s+1$ with basis $\{v_0,...,v_{2s}\}$. In this basis, the Chevalley generators act as matrices with non-vanishing entries $(i,j)$, $0 \leq i,j \leq 2s$, such that 
\beqa
(\rho_{V^{(s)}}(K))_{i,i} = q^{2s-2i}\ ,\qquad (\rho_{V^{(s)}}(E))_{i-1,i} = [2s-i+1]_q \ , \qquad (\rho_{V^{(s)}}(F))_{i,i-1} = [i]_q  \ .  \nonumber
\eeqa
For instance, for the spin$-1/2$ representation of $U_q(sl_2)$ one has:
\beqa
\rho_{V^{(1/2)}}(K)=\begin{pmatrix} q & 0   \\
   0 & q^{-1} \end{pmatrix}  \ ,\qquad \rho_{V^{(1/2)}}(E)=\begin{pmatrix} 0 & 1   \\
   0 & 0 \end{pmatrix} \ ,\qquad \rho_{V^{(1/2)}}(F)=\begin{pmatrix} 0 & 0   \\
   1 & 0 \end{pmatrix}    \ .\label{repec}
\eeqa

\vspace{2mm}

More recently, the {\it equitable} presentation of the quantum algebra $U_q(sl_2)$ has been introduced.  
\begin{thm}
\cite[Theorem 2.1]{ITW}
\label{thm1}
The algebra $U_q(sl_2)$ is isomorphic to the unital associative ${\mathbb K}(q)$-algebra 
with {\it equitable}  generators  $X$, $X^{-1}$, $Y$, $Z$ satisfying the following relations:
\begin{eqnarray}
XX^{-1} = 
X^{-1}X &=&  1\ ,
\label{eq:e1}
\\
\frac{qXY-q^{-1}YX}{q-q^{-1}}&=&1\ ,\qquad 
\frac{qYZ-q^{-1}ZY}{q-q^{-1}}=1\ ,\qquad
\frac{qZX-q^{-1}XZ}{q-q^{-1}}=1\ . \label{eq:e4}
\end{eqnarray}
\end{thm}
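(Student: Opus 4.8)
The plan is to exhibit two mutually inverse $\mathbb{K}(q)$-algebra homomorphisms relating the Chevalley algebra of Definition \ref{def:uq} to the algebra $\mathcal A$ presented by the generators $X,X^{-1},Y,Z$ and the relations (\ref{eq:e1})--(\ref{eq:e4}). For the forward map $\psi:\mathcal A\to U_q(sl_2)$ I would realize the equitable generators inside $U_q(sl_2)$ by
\[
\psi(X)=K,\qquad \psi(X^{-1})=K^{-1},\qquad \psi(Y)=K^{-1}+(q-q^{-1})F,\qquad \psi(Z)=K^{-1}-q^{-1}(q-q^{-1})EK^{-1},
\]
and for the backward map $\phi:U_q(sl_2)\to\mathcal A$ I would invert these formulas, setting
\[
\phi(K)=X,\qquad \phi(F)=\frac{Y-X^{-1}}{q-q^{-1}},\qquad \phi(E)=\frac{q\,(1-ZX)}{q-q^{-1}}.
\]
Since each algebra is presented by generators and relations, establishing that $\psi$ and $\phi$ are well defined reduces to checking that the defining relations of the source are satisfied by the images; once this is done, $\phi\circ\psi$ and $\psi\circ\phi$ are seen to fix every generator, so the maps are mutually inverse and the isomorphism follows.

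Most of the relations are immediate. For $\psi$, relation (\ref{eq:e1}) is clear from (\ref{eq1}), while the first and third relations of (\ref{eq:e4}) follow from a short computation using (\ref{eq3}): in each case the Cartan-Cartan term produces $(q-q^{-1})1$, and the mixed Cartan-root terms cancel because the $q$-weights of $E$ and $F$ under conjugation by $K$ exactly compensate the explicit factors $q,q^{-1}$. Dually, for $\phi$ the relation (\ref{eq1}) is clear, and the two relations of (\ref{eq3}) are nothing but rearrangements of the equitable relations: $KEK^{-1}=q^2E$ is equivalent to $XZ-q^2ZX=1-q^2$, i.e.\ the third relation of (\ref{eq:e4}) multiplied by $-q$, and $KFK^{-1}=q^{-2}F$ is equivalent to $XY-q^{-2}YX=1-q^{-2}$, i.e.\ the first relation of (\ref{eq:e4}) multiplied by $q^{-1}$.

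The heart of the proof is the remaining pair of relations, which are equivalent to one another: the second relation of (\ref{eq:e4}) on the $\psi$ side, and the commutator relation (\ref{eq4}) on the $\phi$ side. For $\psi$ one expands $q\,\psi(Y)\psi(Z)-q^{-1}\psi(Z)\psi(Y)$; after the Cartan-root cross terms cancel as above, the surviving terms are proportional to $qFE-q^{-1}EF$, and substituting (\ref{eq4}) in the form $EF-FE=(K-K^{-1})/(q-q^{-1})$ converts them into a multiple of $1-K^{-2}$. The normalization of the coefficients is chosen precisely so that the resulting $K^{-2}$ contribution cancels the leftover $(q-q^{-1})K^{-2}$, leaving $(q-q^{-1})1$. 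The main obstacle is the mirror computation on the $\phi$ side: one must verify that $\phi(E)\phi(F)-\phi(F)\phi(E)=(X-X^{-1})/(q-q^{-1})$ follows from the equitable relations. This is the only identity requiring the second relation of (\ref{eq:e4}), the one not yet used, and it is the single genuinely nontrivial algebraic step; I would carry it out by computing $ZX\cdot Y$ and $Y\cdot ZX$, eliminating the quadratic term by means of the second relation of (\ref{eq:e4}), and then commuting $X$ past $Y$ and $Z$ via the already-established first and third relations.

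Once both maps are shown to respect all relations, the verification that they are mutually inverse is a one-line check on generators, which completes the proof that $U_q(sl_2)\cong\mathcal A$.
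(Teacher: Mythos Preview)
Your approach is correct and is the standard argument. Note, however, that the paper does not supply its own proof of this statement: it is quoted as \cite[Theorem 2.1]{ITW}, and the paper merely records the isomorphism $\phi$ in (\ref{iso1}) for later use. Your map $\psi$ coincides with that $\phi$ (your expression $\psi(Z)=K^{-1}-q^{-1}(q-q^{-1})EK^{-1}$ equals $K^{-1}-q(q-q^{-1})K^{-1}E$ after commuting $E$ past $K^{-1}$ via (\ref{eq3})), and your inverse map is precisely the one implicit in (\ref{iso1}). So there is nothing to compare beyond saying that you are reconstructing the proof the paper takes for granted.

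One small inaccuracy in your sketch: in the verification of the second relation of (\ref{eq:e4}) for $\psi$, after the Cartan--root cross terms cancel, the surviving quadratic piece is $(q-q^{-1})^2(EF-FE)K^{-1}$ rather than something ``proportional to $qFE-q^{-1}EF$''; substituting (\ref{eq4}) then gives $(q-q^{-1})(1-K^{-2})$, which combines with the leftover $(q-q^{-1})K^{-2}$ to yield $(q-q^{-1})\cdot 1$, as required. This does not affect the correctness of the argument, only the wording of that one line.
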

The isomorphism with the Chevalley presentation of Definition \ref{def:uq} is given by:
\beqa
\label{iso1}
\phi: \quad X^{{\pm}1} &\rightarrow & K^{{\pm}1}\ ,\qquad 
Y \rightarrow  K^{-1}+(q-q^{-1})F\ , \qquad
Z \rightarrow  K^{-1}-q(q-q^{-1})K^{-1}E\ .
\eeqa
Using the automorphism $\theta$, one gets another isomorphism:
\beqa
\label{iso2}
X^{{\pm}1} &\rightarrow & K^{\mp 1}\ ,\qquad 
Y \rightarrow  K+(q-q^{-1})E\ , \qquad
Z \rightarrow  K-q(q-q^{-1})KF\ .
\eeqa
Note that the above isomorphisms can be generalized combining automorphisms of the Chevalley presentation of $U_q(sl_2)$. For instance \cite[Lemma 2.3]{ITW}: 
\beqa
E \rightarrow aEK^\ell\ , \qquad F \rightarrow a^{-1}K^{-\ell}F\ , \qquad K \rightarrow K\ , \vspace{1mm}
\eeqa
for any integer $\ell$ and $a\in {\mathbb K}$.\vspace{1mm}

The Hopf algebra structure is ensured by the comultiplication $\Delta$   such that 
\begin{eqnarray}
&&\Delta(X) = X \otimes X\ ,\qquad \Delta(Y) = (Y-1) \otimes X^{-1}+
1 \otimes Y\ ,\qquad
\Delta(Z) = 
(Z-1) \otimes X^{-1}+
1 \otimes Z \ .\label{cope}
\end{eqnarray}
The counit $\varepsilon$ is such that:
\begin{eqnarray}
\varepsilon(X)=1\ , \qquad \qquad \varepsilon(Y)=1\ ,
\qquad \qquad 
\varepsilon(Z)=1 \label{cue}
\end{eqnarray}
and the antipode $S$ is such that:
\begin{eqnarray}
S(X)= X^{-1}\ ,
\qquad \qquad 
S(Y)= 1+X-YX\ ,
\qquad \qquad 
S(Z) = 1+X-ZX\ .\label{ae}
\end{eqnarray}

Note that the subalgebra $\cal B$ generated by $\{Y,Z\}$ gives an example of {\it non-homogeneous} Borel subalgebra, see \cite[Example 3.6]{LV19}. There is no analog of $\cal B$ in $sl_2$ (specialization $q\rightarrow 1$).
As can be seen from (\ref{cope}), $\cal B$ is a right coideal subalgebra of $U_q(sl_2)$:
 $\Delta(\cal B) \subset \cal B \otimes U_q(sl_2)$. 
On this subject, see \cite{HK,LV17,Vo,LV19}. \vspace{1mm}

In the equitable presentation of $U_q(sl_2)$, the central element is given by:
\beqa
\Omega_e = qX + q^{-1}Y + qZ - qXYZ .\label{Cente}
\eeqa
Several equivalent expressions can be derived using (\ref{eq:e4}). 
Note that using the inverse of (\ref{iso1}), one finds $\Omega_c \rightarrow \Omega_e/(q-q^{-1})^2$.
\vspace{1mm}

For the equitable presentation, irreducible finite dimensional representations have been studied in details in \cite{ITW,Ter09}. Adapting the notations of \cite{Ter09} (see Theorem 10.12), choose ${\mathbb K}={\mathbb C}$ and consider the vector space  $V^{(s)}$.
Among the twelve known bases,  choose the one  denoted  $\{u_0,...,u_{2s}\}$ of type `$[y]_{col}$' in \cite[Theorem 10.12]{Ter09}. In this basis, the equitable generators act as matrices with non-vanishing entries $(i,j)$, $0 \leq i,j \leq 2s$ such that 
\beqa
(\rho_{V^{(s)}}(X))_{i,i} &=& q^{2s-2i}\ ,\nonumber\\
(\rho_{V^{(s)}}(Y))_{i,i} &=& q^{2i-2s}\ ,\qquad (\rho_{V^{(s)}}(Y))_{i,i-1} = q^{2s} - q^{2i-2-2s} \ ,\nonumber\\ 
(\rho_{V^{(s)}}(Z))_{i,i} &=& q^{2i-2s}\ ,\qquad (\rho_{V^{(s)}}(Z))_{i-1,i} = q^{-2s} - q^{2i-2s}  \ .  \nonumber
\eeqa
For instance, for the spin$-1/2$ representation of $U_q(sl_2)$ one has:
\beqa
&& \rho_{V^{(1/2)}}(X)=\begin{pmatrix} q & 0   \\
   0 & q^{-1} \end{pmatrix}  \ ,\qquad \rho_{V^{(1/2)}}(Y)=\begin{pmatrix} q^{-1} & 0 \\
   q-q^{-1}   & q \end{pmatrix} \ ,\qquad \rho_{V^{(1/2)}}(Z)=\begin{pmatrix} q^{-1} & q^{-1}-q   \\
   0 & q \end{pmatrix}    \ .\label{repeq}
\eeqa

\section{Freidel-Maillet type presentations for $U_q(sl_2)$}\label{sec3}
In this section, we construct quantum K-operators that satisfy a system of Freidel-Maillet equations associated with  $U_q(gl_2)$. Then, a presentation of Freidel-Maillet  type for $U_q(sl_2)$ in terms of K-operators of Chevalley or equitable type is proposed. The comultiplication,  counit and antipode of $U_q(sl_2)$ are identified, and central elements are given. In particular, the non-homogeneous Borel subalgebra of $U_q(sl_2)$ fits in this framework. A set of intertwining relations satisfied by the K-operators is obtained. Some basic examples of constant K-matrices of Chevalley and equitable type are also displayed.\vspace{1mm}

As a preliminary, we first recall the well-known FRT presentation  of $U_q(gl_2)$ that will be used to construct a one-parameter family of K-operators whose entries are in a subalgebra of $U_q(gl_2)$.    
\subsection{The FRT presentation of $U_q(gl_2)$}\label{sec31}
We refer the reader to \cite{FRT87,FRT89,DF93} for details. Define the quantum R-matrix 
\begin{align}
R =\left(
\begin{array}{cccc} 
 q    & 0 & 0 & 0 \\
0  &  1 & q-q^{-1} & 0 \\
0  &  0 & 1 &  0 \\
0 & 0 & 0 & q
\end{array} \right) \ \label{Rgl}
\end{align}
with deformation parameter $q$.
It is known that $R$ satisfies the quantum Yang-Baxter equation (\ref{YB0}) in the space ${\cal V}_1\otimes {\cal V}_2\otimes {\cal V}_3$. 
%Using the standard notation 
%
%\beqa
%R_{ij}\in \mathrm{End}({\cal V}_i\otimes {\cal V}_j),\label{notR}
%\eeqa
%
%the Yang-Baxter equation reads 
%
%\begin{align}
%R_{12}R_{13}R_{23}=R_{23}R_{13}R_{12}\ .\label{YB0}
%\end{align}
%   
In addition, $R$ satisfies $R-PR^{-1}P=(q-q^{-1})P$, where $P$ is the permutation operator $P(v_1 \otimes v_2) = v_2 \otimes v_1$, $v_1,v_2\in {\mathbb C}^2$. Note that $R^{-1}=R_{q\leftrightarrow q^{-1}}$ and $R^{t_1t_2}=PRP=R_{21}$. Recall the notation $\cL_1=\cL \otimes {\mathbb I}$, $\cL_2= {\mathbb I} \otimes \cL$.
\begin{thm} \label{thm:isoUglFRT}
\cite{FRT89,DF93}
 $U_q(gl_2)$ admits a presentation of FRT type. Define the quantum L-operators with entries $\cL^\pm_{i,j}$, $i,j=1,2$: 
\beqa
 \cL^+=\begin{pmatrix} K_1 & (q-q^{-1})K_1F   \\
   0 & K_2
      \end{pmatrix} \ ,\qquad  \cL^-=\begin{pmatrix} K_1^{-1} & 0   \\
   -(q-q^{-1})EK_1^{-1} & K_2^{-1} \label{Lpm}
      \end{pmatrix} \ .
\eeqa
%
%$U_q(gl_2)$ is isomorphic to the associative algebra with generators $\{\el\cL^+_{ij}, \el\cL^-_{ji}|1\leq i \leq j \leq 2\}$. Let 
%$\cL^\pm = (\el\cL^\pm_{ij})$, $1\leq j < i \leq 2$ with $\el\cL^+_{21}= \el\cL^-_{12}=0$.
The defining relations are given in a matrix form as follows:
\beqa
%\el\cL^+_{ii}\el\cL^-_{ii}&=&\el\cL^-_{ii}\el\cL^+_{ii}\ ,\\
\cL^+_{i,i}\cL^-_{i,i}&=&\cL^-_{i,i}\cL^+_{i,i}=1 \ ,\label{frt1}\\
R\cL_1^\pm \cL_2^\pm &=& \cL_2^\pm \cL_1^\pm R \ ,\label{frt2}\\
R\cL_1^+ \cL_2^- &=& \cL_2^- \cL_1^+ R \ .\label{frt3}
\eeqa
\end{thm}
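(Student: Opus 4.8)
The plan is to verify that the explicitly given L-operators $\cL^\pm$ satisfy the relations \er{frt1}--\er{frt3}, and then argue that these relations, together with the entries of $\cL^\pm$, recover exactly the defining relations \er{eqgl1}--\er{eqgl5} of $U_q(gl_2)$. The statement is an ``if and only if'' in spirit: the matrix entries generate $U_q(gl_2)$, and the Freidel-Maillet/FRT relations are equivalent to the Chevalley relations. Since this is the classical FRT theorem cited from \cite{FRT89,DF93}, the core of a self-contained proof is a direct computation.

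First I would record the entries: from \er{Lpm} the nonzero entries of $\cL^+$ are $\cL^+_{1,1}=K_1$, $\cL^+_{1,2}=(q-q^{-1})K_1F$, $\cL^+_{2,2}=K_2$, and of $\cL^-$ are $\cL^-_{1,1}=K_1^{-1}$, $\cL^-_{2,1}=-(q-q^{-1})EK_1^{-1}$, $\cL^-_{2,2}=K_2^{-1}$. Relation \er{frt1} is then immediate from \er{eqgl1}, since $\cL^\pm_{i,i}$ are just $K_i^{\pm1}$. The substance is in \er{frt2} and \er{frt3}. Here I would expand each matrix relation as an equation in $\End(\C^2\otimes\C^2)\otimes U_q(gl_2)$ using the explicit $R$ from \er{Rgl}, and read off the scalar $4\times4$ coefficient of each basis matrix unit $E_{ab}\otimes E_{cd}$. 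Each such coefficient gives one relation among the generators. The diagonal sector reproduces the commutativity \er{eqgl2} and the $K$-conjugation rules \er{eqgl3}--\er{eqgl4}; the mixed relation \er{frt3} in the appropriate matrix slot reproduces the commutator \er{eqgl5}, where the $(q-q^{-1})$ factors in the off-diagonal entries combine with the $(q-q^{-1})$ entry of $R$ to produce $\frac{K_1K_2^{-1}-K_1^{-1}K_2}{q-q^{-1}}$. Conversely, assuming \er{eqgl1}--\er{eqgl5}, each of these extracted scalar identities holds, so \er{frt1}--\er{frt3} are satisfied.

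To close the argument that this is a genuine presentation (not merely a set of satisfied relations), I would invoke that the algebra defined abstractly by generators $\cL^\pm_{i,j}$ modulo \er{frt1}--\er{frt3} maps onto $U_q(gl_2)$ via the assignment \er{Lpm}, and that this map is an isomorphism. The surjectivity is clear since $K_1^{\pm1},K_2^{\pm1},E,F$ all appear among the entries. For injectivity one compares a PBW-type basis on both sides, or simply cites \cite{FRT89,DF93} as the paper does; I expect the author to take the presentation as known and only check that \er{Lpm} provides an explicit solution.

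The main obstacle is purely bookkeeping: correctly expanding the products $R\cL_1^\pm\cL_2^\pm$ and $\cL_2^\pm\cL_1^\pm R$ as $4\times4$ matrices with noncommuting entries, keeping the tensor-leg conventions $\cL_1=\cL\otimes{\mathbb I}$, $\cL_2={\mathbb I}\otimes\cL$ straight, and tracking signs and powers of $(q-q^{-1})$ so that the off-diagonal entries reproduce \er{eqgl5} with the correct normalization. There is no conceptual difficulty beyond this; the computation is finite and mechanical, and the only place where care is genuinely needed is matching the off-diagonal slot of \er{frt3} to the $[E,F]$ relation.
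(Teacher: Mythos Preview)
Your proposal is correct and is the standard way to verify the FRT presentation by direct computation. Note, however, that the paper does not actually supply a proof of this theorem: it is stated with a citation to \cite{FRT89,DF93} and taken as known, with no proof environment following it. So there is nothing to compare against; your outlined computation is precisely the self-contained verification one would give if a proof were required, and your remark that injectivity (the PBW comparison) is ultimately deferred to the cited references matches what the paper itself does.
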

From (\ref{frt2}), (\ref{frt3}), one shows: 
\beqa
R_{21}^{-1}\cL_1^\pm \cL_2^\pm= \cL_2^\pm \cL_1^\pm R_{21}^{-1}\ ,\qquad
R_{21}^{-1}\cL_1^- \cL_2^+= \cL_2^+ \cL_1^- R_{21}^{-1}\ .\label{frt4}
\eeqa
\begin{rem} The inverse L-operators are given by:
\beqa
 (\cL^+)^{-1}=\begin{pmatrix} K_1^{-1} & -(q-q^{-1})FK_2^{-1}   \\
   0 & K_2^{-1}
      \end{pmatrix} \ ,\qquad  (\cL^-)^{-1}=\begin{pmatrix} K_1 & 0   \\
   (q-q^{-1})K_2E & K_2
      \end{pmatrix} \ .
\eeqa
\end{rem}
The relations satisfied by the inverse L-operators are easily derived from (\ref{frt2})-(\ref{frt3}), see \cite[Proposition 2.1]{DF93}. For instance,
\beqa
R_{21} (\cL^\pm)^{-1}_1(\cL^\pm)^{-1}_2 &=& (\cL^\pm)^{-1}_2 (\cL^\pm)^{-1}_1 R_{21} \ ,\label{invRTT1}\\
R_{21} (\cL^-)^{-1}_1(\cL^+)^{-1}_2 &=& (\cL^+)^{-1}_2 (\cL^-)^{-1}_1 R_{21} \ .\label{invRTT2}
\eeqa
%
%and 
%
%\beqa
%R^{-1} ((\cL^\pm)^t)^{-1}_1((\cL^\pm)^t)^{-1}_2 &=& ((\cL^\pm)^t)^{-1}_2 ((\cL^\pm)^t)^{-1}_1 R^{-1} \ ,\\
%R^{-1} ((\cL^-)^t)^{-1}_1((\cL^+)^t)^{-1}_2 &=& ((\cL^+)^t)^{-1}_2 ((\cL^-)^t)^{-1}_1 R^{-1} \ .\label{invRTT4}
%\eeqa
%

In the FRT presentation of Theorem \ref{thm:isoUglFRT}, the Hopf algebra structure of $U_q(gl_2)$ is characterized as follows. The corresponding expressions for the comultiplication (\ref{cpgl2}), counit (\ref{cugl2}) and antipode (\ref{agl2})  are defined by\footnote{$((T)_{[\textsf 1]}(T')_{[\textsf 2]})_{ij} =\sum_{k=1}^2 (T)_{ik} \otimes  (T')_{kj}$ \cite{Skly88}.}
\beqa
\Delta(\cL^\pm) = (\cL^\pm)_{[1]}  (\cL^\pm)_{[2]}  \ ,\qquad \varepsilon(\cL^\pm)= {\mathbb I} \ ,\qquad S(\cL^\pm)= (\cL^\pm)^{-1}  \ ,\label{ruleFRT}
\eeqa
where ${\mathbb I}$ denotes the $2 \times 2$ identity matrix.
Note that due to the comultiplication rule, the last expressions easily follow from the defining property of the antipode  $\mu \circ (id \otimes S) \circ \Delta = \mu \circ (S \otimes id) \circ \Delta = \eta \circ \varepsilon$, which gives $\cL^\pm S(\cL^\pm)= S(\cL^\pm)\cL^\pm = {\mathbb I}$. \vspace{1mm}

%\begin{thm}\label{thm:isoUglFRT}\cite{FRT1,DF93} There exists a Hopf algebra isomorphism given by:
%
%\beqa
%\el\cL^\pm_{ii}=K_i^{\pm 1}\ , \qquad  \el\cL^+_{12}= (q-q^{-1}) K_1 F\ , \qquad \el\cL^-_{21}=-(q-q^{-1})EK_1^{-1}\ .   
%\eeqa
%
%\end{thm}
%

Central elements of  $U_q(gl_2)$ are derived from the following quantum determinants. Define $U=PR-q({\mathbb I} \otimes {\mathbb I})  $ which satisfies the Hecke braid relation $U_{12}U_{23}U_{12}=U_{12}$ and $U^2 =-(q+q^{-1})U$. As usual, below `$\rm tr_{12}$' stands for the trace over $\cal V_1 \otimes \cal V_2$.
One has:
\beqa
{\rm det}_q^{(1)} &=& {\rm tr}_{12}(U_{12} \cL_1^+\cL_2^+) =  - (q+q^{-1})\Omega_{1,c} \ ,\label{qdet1}\\
{\rm det}_q^{(2)} &=& {\rm tr}_{12}(U_{12} \cL_1^+\cL_2^-) = -(q-q^{-1})^2  \Omega_{2,c} \ ,\label{qdet2}
\eeqa
with (\ref{Centgl21c}), (\ref{Centgl22c}). \vspace{1mm}

For further discussion, let us introduce $U'_q(sl_2)$ the extension of the Chevalley presentation of $U_q(sl_2)$ by the elements $K^{\pm 1/2}$ \cite{FMu}. The defining relations of $U'_q(sl_2)$ are given by (\ref{eqgl1})-(\ref{eqgl5}) with the substitution 
\beqa
K_1 \rightarrow K^{1/2}\ , \qquad K_2 \rightarrow K^{-1/2} \ .\label{Kdemi}
\eeqa
A FRT presentation for $U'_q(sl_2)$ is obtained as a corollary of Theorem \ref{thm:isoUglFRT}: it is defined as the quotient of (\ref{frt1})-(\ref{frt3}) by $\rm det(\cL^\pm)=1$. See e.g. \cite{DF93,FMu} for details. The corresponding quantum L-operators are given by (\ref{Lpm}) with (\ref{Kdemi}).

\subsection{K-operators and $U_q(gl_2)$}
Explicit examples of `dressed' solutions of Freidel-Maillet type equations can be constructed using the FRT presentation of $U_q(gl_2)$ given in Theorem \ref{thm:isoUglFRT}. The analysis below is done by analogy with \cite[Proposition 2]{Skly88}, so we only sketch the main steps of the proofs. Consider the R-matrix (\ref{Rgl}) and introduce the diagonal matrix: 
\beqa
 R^{(0)}=diag(1,q^{-1},q^{-1},1)\ .\label{R0}
\eeqa
Consider the Freidel-Maillet equations:
\beqa
R\ (\cK\otimes {\mathbb I})\ R^{(0)}\ ({\mathbb I} \otimes \cK)\
&=& \ ({\mathbb I} \otimes \cK)\  R^{(0)}\ (\cK\otimes {\mathbb I})\ R\ ,\label{FMKKgl}\\
R\ (\cK\otimes {\mathbb I})\ R^{(0)}\ ({\mathbb I} \otimes \cK')\
&=& \ ({\mathbb I} \otimes \cK')\  R^{(0)}\ (\cK\otimes {\mathbb I})\ R\ .\label{FMKKpgl}
%\\
%R_{21}^{-1}\ (\cK'\otimes {\mathbb I})\ R^{(0)}\ ({\mathbb I} \otimes \cK)\ 
%&=& \ ({\mathbb I} \otimes \cK)\  R^{(0)}\ (\cK'\otimes {\mathbb I})\ R_{21}^{-1}\ .\label{FMKpKgl}
\eeqa
For the following analysis, we need the `reduced' quantum Lax operators:
\beqa
\cL^{0,+}=diag(K_1,K_2)  \ ,\qquad \bar \cL^{0,-}=diag(K_2^{-1},K_1^{-1})\ \label{redL}
\eeqa
and the elementary K-matrices 
\beqa
\cK^{0,\alpha} =  \begin{pmatrix} 1 & 0  \\
  \alpha  & 1 \end{pmatrix}\ , \label{K0alpha}
\eeqa
where the scalar $\alpha \in {\mathbb K}$ is introduced. Combining (\ref{Lpm}), (\ref{redL}) and (\ref{K0alpha}), `dressed' K-operators satisfying the system (\ref{FMKKgl}),(\ref{FMKKpgl}), are easily obtained. 
\begin{lem}\label{lemRKR0K} 

For any $\alpha \in {\mathbb K}$, the K-operators
\beqa
\cK^{+,\alpha}= \bar \cL^{0,-}\cK^{0,\alpha}\cL^{+} \qquad \mbox{and} \qquad \cK^{-,\alpha} = \cL^{0,+}\cK^{0,\alpha}\cL^{-}\  \label{solKdef}
\eeqa
satisfy the Freidel-Maillet equations:\vspace{1mm}

(i) (\ref{FMKKgl}) for $\cK \rightarrow \cK^{+,\alpha}$ or $\cK \rightarrow \cK^{-,\alpha}$;\\

\vspace{-2mm}

(ii) (\ref{FMKKpgl}) for $\cK \rightarrow \cK^{+,\alpha}$ and $\cK' \rightarrow \cK^{-,\alpha}$.

%\vspace{-2mm}

%(iii) (\ref{FMKpKgl}) for $\cK \rightarrow \cK^{+,\alpha}$ and $\cK' \rightarrow K^{-,\alpha}$.
%
\end{lem}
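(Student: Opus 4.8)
The plan is to treat this as a \emph{dressing} construction in the spirit of \cite[Proposition 2]{Skly88}: the factorized form $\cK^{+,\alpha}=\bar\cL^{0,-}\cK^{0,\alpha}\cL^+$ (and likewise $\cK^{-,\alpha}=\cL^{0,+}\cK^{0,\alpha}\cL^-$) is designed precisely so that the Freidel--Maillet equation for the full operator collapses onto a handful of elementary relations among the three factors. Writing $\cK^\pm_1=A_1\cK^0_1 C_1$, where $A$ and $C$ denote the left and right dressing factors and $\cK^0\equiv\cK^{0,\alpha}$, the substitution into the left-hand side of (\ref{FMKKgl}) produces the string $R\,A_1\cK^0_1 C_1\,R^{(0)}\,A_2\cK^0_2 C_2$, and the entire proof amounts to transporting the $C$'s to the far right, liberating the inner block $R\,\cK^0_1 R^{(0)}\cK^0_2$, flipping it, and re-assembling.

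The building blocks I would establish first are: (i) the \textbf{transport relation} $\cL^+_1 R^{(0)}\bar\cL^{0,-}_2=\bar\cL^{0,-}_2 R^{(0)}\cL^+_1$, together with its partners obtained by exchanging the two auxiliary spaces (legitimate because $P R^{(0)}P=R^{(0)}$) and by replacing $(\cL^+,\bar\cL^{0,-})$ by the pairs relevant to $\cK^{-,\alpha}$ and to the mixed equation (\ref{FMKKpgl}); (ii) the \textbf{constant Freidel--Maillet relation} $R\,\cK^{0,\alpha}_1 R^{(0)}\cK^{0,\alpha}_2=\cK^{0,\alpha}_2 R^{(0)}\cK^{0,\alpha}_1 R$, a finite scalar matrix identity; (iii) the trivial RTT relation $R\,A_1 A_2=A_2 A_1 R$ for the diagonal reduced Lax operators, immediate since their entries commute; and (iv) the genuine RTT relations (\ref{frt2})--(\ref{frt3}) for $\cL^\pm$, already available from Theorem \ref{thm:isoUglFRT}. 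Relation (i) is proved by a direct $4\times4$ computation in which the off-diagonal $E,F$ entries of the full Lax operator are pushed past the $K^{\pm1}$ entries of the reduced one using (\ref{eqgl3})--(\ref{eqgl4}); this is exactly the step that forces the specific form $R^{(0)}=\mathrm{diag}(1,q^{-1},q^{-1},1)$.

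With these in hand the computation for (i) runs as follows. Using the transport relation and the trivial commutativity of a scalar-entry matrix in one space with an operator-entry matrix in the other, one rewrites $C_1 R^{(0)} A_2\cK^0_2 C_2$ so that all $C$'s migrate to the right, yielding $R\,A_1 A_2\,\cK^0_1 R^{(0)}\cK^0_2\,C_1 C_2$. Applying (iii) to push $R$ through $A_1 A_2$, then the constant relation (ii) to flip the inner block, then (\ref{frt2}) to swap $C_1 C_2$ across $R$, produces $A_2 A_1\cK^0_2 R^{(0)}\cK^0_1 C_2 C_1 R$; a final reshuffling, using the space-exchanged transport relation and the same scalar-versus-operator commutations, regroups this into $A_2\cK^0_2 C_2\,R^{(0)}\,A_1\cK^0_1 C_1\,R=\cK_2 R^{(0)}\cK_1 R$, i.e.\ the right-hand side. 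Part (ii) of the lemma is structurally identical, the only changes being that the cross relation (\ref{frt3}) replaces (\ref{frt2}) and that the mixed transport relations between the $+$ and $-$ blocks are invoked.

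The main obstacle is the bookkeeping concentrated in the transport relations of type (i): one must check that the full L-operator dressing factor commutes through $R^{(0)}$ with the \emph{reduced} (diagonal) factor placed in the opposite space, and that every intermediate scalar/operator commutation is genuinely legitimate. Everything downstream is formal rearrangement, so the substance of the lemma lies in verifying that $R^{(0)}$ is tuned so that relation (i) closes; this is why it suffices to sketch the main steps.
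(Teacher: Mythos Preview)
Your proposal is correct and follows essentially the same dressing argument as the paper. The paper isolates precisely the same list of ingredients---the scalar Freidel--Maillet relation for $\cK^{0,\alpha}$, the RTT relations for $\cL^\pm$ and for the reduced diagonal operators, the two ``transport'' relations $\cL_1 R^{(0)}(\cL^0)_2=(\cL^0)_2 R^{(0)}\cL_1$ and $(\cL^0)_1 R^{(0)}\cL_2=\cL_2 R^{(0)}(\cL^0)_1$, and the commutation of $\cK^{0,\alpha}$ in one space with any Lax factor in the other---and then invokes the Sklyanin dressing mechanism, so your plan matches it step for step; if anything, you are more explicit about how the chain of rearrangements actually runs.
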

\begin{proof} Consider (i). Assume there exists a matrix $\cK^0$, two quantum Lax operators $\cL , \cL^{0}$ and a R-matrix $R'$ such that the following relations hold:
\beqa  R \   \cK^0_1 \ R'\  \cK^0_2\
&=& \   \cK^0_2  \ R' \  \cK^0_1 \  R\  ,\label{RKinit} \\
 R    \cL_1  \cL_2   &=& \cL_2   \cL_1 R \ ,\label{RLpLp}\\
 R   (\cL^{0})_1(\cL^{0})_2 &=& (\cL^{0})_2(\cL^{0})_1  R \ ,\label{RL0L0}\\
(\cL^{0})_1R'  \cL_2  &=&  \cL_2   R' (\cL^{0})_1\ ,\label{L0R0L}\\
\cL_1 R'(\cL^{0})_2  &=& (\cL^{0})_2R' \cL_1  \  \label{LR0L0}
\eeqa
and
\beqa
\big[\cK^0_i, \cL_j \big]=0 \ , \qquad \big[\cK^0_i, (\cL^0)_j \big]=0\ , \qquad i\neq j\ .\label{comKL}
\eeqa
Recall (\ref{Lpm}), (\ref{redL}), (\ref{K0alpha}). For the choices
\beqa
R' = R^{(0)}\ ,\quad  \cL \rightarrow \cL^+\ ,\quad \cL^{0} \rightarrow \bar \cL^{0,-} \quad \mbox{and} \quad \cK^{0} \rightarrow \cK^{0,\alpha}  \ ,
\eeqa
one finds that all above relations are satisfied. Adapting \cite[Proposition 2]{Skly88},  using those relations one finds  
$\bar \cL^{0,-}\cK^{0,\alpha}\cL^{+}$ satisfies  (\ref{FMKKgl}). For the choices
\beqa
R' = R^{(0)}\ ,\quad  \cL \rightarrow \cL^-\ ,\quad \cL^{0} \rightarrow \cL^{0,+} \quad \mbox{and} \quad \cK^{0} \rightarrow \cK^{0,\alpha}  \ ,
\eeqa
one finds   $\cL^{0,+}\cK^{0,\alpha}\cL^{-}$ also satisfies (\ref{FMKKgl}). This completes the proof of (i).

To show (ii), it is sufficient to check that the basic equations (\ref{RKinit})-(\ref{LR0L0}) and (\ref{comKL}) hold for
the substitutions $R' = R^{(0)}$ and
\beqa
(\cL^{0})_1 \rightarrow (\bar \cL^{0,-})_1 \quad \mbox{and} \quad (\cL^{0})_2 \rightarrow (\cL^{0,+})_2\ .
\eeqa
%
%The proof of (iii) is done along the same line, for the substitutions $R \rightarrow R_{21}^{-1}$ and $R' = R^{(0)}$ into (\ref{RKinit})-(\ref{LR0L0}) and (\ref{comKL}), using (\ref{frt4}).
\end{proof}

Explicitly, the K-operators (\ref{solKdef}) read
\beqa
&& \quad \cK^{+,\alpha} =  \begin{pmatrix} K_1K_2^{-1} & (q-q^{-1})K_1K_2^{-1}F  \\
   \alpha & K_2K_1^{-1} + \alpha(q-q^{-1})F \end{pmatrix}\quad  \mbox{and} \quad  
	\cK^{-,\alpha} = \begin{pmatrix} 1 &  0 \\
  \alpha K_2K_1^{-1}-q(q-q^{-1})K_2K_1^{-1}E & 1 \end{pmatrix}\ .
%		\begin{pmatrix} K_2K_1^{-1} - q(q-q^{-1})K_2K_1^{-1}E & 1  \\
 %  -q(q-q^{-1})E & K_1K_2^{-1} \end{pmatrix}
\  \label{solKexp}
\eeqa
Note that although the K-operators are built from $U_q(gl_2)$, it is seen from the explicit form (\ref{solKexp}) that their entries belong to the subalgebra generated by (\ref{isoglsl}). 
\vspace{1mm}

To prepare the analysis in the next subsection,  let us first observe that more general `dressed' solutions  of (\ref{FMKKgl}) and (\ref{FMKKpgl}) can be easily derived by induction. The proof of the following Lemma\footnote{The notation $((T)_{[\textsf 2]}(T')_{[\textsf 1]}(T'')_{[\textsf 2]})_{ij} =\sum_{k,\ell=1}^2 (T')_{k\ell} \otimes  (T)_{ik}(T'')_{\ell j}$ is used.} can be done similarly to Lemma \ref{lemRKR0K}, thus we skip it. 
\begin{lem} \label{lemcopgl}
The assertions (i), (ii), in Lemma \ref{lemRKR0K} hold for the substitution
%$\delta: U_q(gl_2) \rightarrow U_q(gl_2) \otimes U_q(gl_2)$
%
\beqa
&&\cK^{+,\alpha} \rightarrow %\delta(\cK^{+,\alpha}) 
\bar\cK^{+,\alpha} = 
(\bar \cL^{0,-})_{[2]}(\cK^{+,\alpha})_{[1]}(\cL^{+})_{[2]}   \ ,\qquad \cK^{-,\alpha} \rightarrow 
%\delta(\cK^{-,\alpha}) = 
 \bar\cK^{-,\alpha} = (\cL^{0,+})_{[2]}(\cK^{-,\alpha})_{[1]}(\cL^{-})_{[2]}\  .\label{copgl}
\eeqa
\end{lem}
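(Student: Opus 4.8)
The plan is to read $\bar\cK^{\pm,\alpha}$ as a \emph{second} pass of the abstract dressing scheme \er{RKinit}--\er{comKL} already exploited in \lemref{lemRKR0K}, the new seed being the solution manufactured there rather than the elementary $\cK^{0,\alpha}$. Here the bracket notation $(\cdot)_{[1]}(\cdot)_{[2]}$ is the very coproduct device of \er{ruleFRT}: the entries of $\bar\cK^{\pm,\alpha}$ live in two \emph{commuting} copies of $U_q(gl_2)$, the slot $[1]$ carrying the K-part and the slot $[2]$ the Lax dressing, while $R$ and $R^{(0)}$ continue to act only on the auxiliary spaces $\cal V_1, \cal V_2$. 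Keeping these two tensor structures rigorously apart, I would instantiate the scheme with
\[ \cK^0 \to (\cK^{\pm,\alpha})_{[1]}\ ,\qquad \cL \to (\cL^{\pm})_{[2]}\ ,\qquad \cL^{0} \to (\bar\cL^{0,-})_{[2]}\ \mbox{or}\ (\cL^{0,+})_{[2]}\ ,\qquad R' = R^{(0)}\ , \]
so that the two products in \er{copgl} literally take the form $\cL^{0}\cK^{0}\cL$ with entries now distributed over the two algebra factors.

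I would then verify the hypotheses \er{RKinit}--\er{comKL} one by one. The seed relation \er{RKinit} for $(\cK^{\pm,\alpha})_{[1]}$ is precisely assertion (i) of \lemref{lemRKR0K}: since \er{FMKKgl} involves only the auxiliary spaces and products taken in the algebra, transporting all entries into the single slot $[1]$ leaves the identity untouched. The Lax relations \er{RLpLp}, \er{RL0L0} and the mixed relations \er{L0R0L}, \er{LR0L0} for $(\cL^{\pm})_{[2]}$, $(\bar\cL^{0,-})_{[2]}$, $(\cL^{0,+})_{[2]}$ are the same identities already checked for $\cL^{\pm}, \bar\cL^{0,-}, \cL^{0,+}$, carried over verbatim into slot $[2]$. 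The commutativity \er{comKL} is the point where the construction pays off: the seed sits in slot $[1]$ and all dressing operators in slot $[2]$, so their entries occupy distinct tensor factors of $U_q(gl_2)\otimes U_q(gl_2)$ and commute \emph{for every} pair of auxiliary positions, not merely for $i\neq j$. With all hypotheses in force the Sklyanin-type computation of \cite[Proposition 2]{Skly88}, reproduced exactly as in \lemref{lemRKR0K}, then returns \er{FMKKgl} for both $\bar\cK^{+,\alpha}$ and $\bar\cK^{-,\alpha}$, which is (i). For (ii) I would instead feed in the mixed seed relation --- assertion (ii) of \lemref{lemRKR0K} for the pair $(\cK^{+,\alpha})_{[1]}, (\cK^{-,\alpha})_{[1]}$ --- and dress with $(\bar\cL^{0,-})_1$ in the first auxiliary space and $(\cL^{0,+})_2$ in the second, exactly as in the proof of (ii) there, to obtain \er{FMKKpgl} for $\bar\cK^{+,\alpha}, \bar\cK^{-,\alpha}$.

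The main obstacle is purely organizational rather than conceptual: one must police the boundary between the auxiliary-space indices, on which $R$ and $R^{(0)}$ act, and the algebra slots $[1], [2]$ introduced by the coproduct, and in particular confirm that the two mixed $R^{(0)}$-relations \er{L0R0L}, \er{LR0L0} genuinely survive the passage into slot $[2]$ (they do, since $R^{(0)}$ is a scalar matrix and slot placement is an algebra homomorphism, so every polynomial relation among the entries is preserved). Once this bookkeeping is set up, the step is a faithful repetition of \lemref{lemRKR0K}, and since the slot-$[2]$ dressing may be applied again to a fresh commuting copy of $U_q(gl_2)$ the argument iterates, yielding the announced family of solutions by induction.
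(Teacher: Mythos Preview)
Your proposal is correct and follows precisely the route the paper indicates: the paper skips the proof entirely, saying only that it ``can be done similarly to \lemref{lemRKR0K}'' and that such dressed solutions are obtained ``by induction''. You have simply spelled out what that sentence means --- feeding the output of \lemref{lemRKR0K} back into the abstract scheme \er{RKinit}--\er{comKL} as the new seed, with the Lax dressing placed in a commuting second tensor slot so that \er{comKL} becomes tautological --- and your verification of the remaining hypotheses and the handling of the mixed case (ii) match the paper's pattern exactly.
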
 
Explicitly, the entries of the K-operators  %$(\delta(\cK^{\pm,\alpha}))_{i,j}\in U_q(gl_2) \otimes U_q(gl_2)$
 read as follows:
\beqa
(\bar\cK^{+,\alpha})_{1,1}&=& K_1K_2^{-1} \otimes K_1K_2^{-1} \ ,\nonumber \\
(\bar\cK^{+,\alpha})_{2,2}&=&  \left( K_2K_1^{-1} +\alpha (q-q^{-1})F -\alpha \right) \otimes K_2K_1^{-1} + 1 \otimes \left( \alpha K_2K_1^{-1} +\alpha (q-q^{-1})F \right)\ ,\nonumber\\
(\bar\cK^{+,\alpha})_{1,2}&=& K_1K_2^{-1} \otimes (q-q^{-1})K_1K_2^{-1}F + (q-q^{-1})K_1K_2^{-1}F \otimes 1 \ ,\nonumber\\
(\bar\cK^{+,\alpha})_{2,1}&=& \alpha (1\otimes 1)\ \nonumber
\eeqa
and
\beqa
(\bar\cK^{-,\alpha})_{1,1}&=& ((\bar\cK^{-,\alpha}))_{2,2} = 1 \otimes 1 \ ,\nonumber\\
(\bar\cK^{-,\alpha})_{1,2}&=& 0 \ ,\nonumber\\
(\bar\cK^{-,\alpha})_{2,1}&=& \left( \alpha (K_2K_1^{-1} -1) -q(q-q^{-1})K_2K_1^{-1}E \right) \otimes K_2K_1^{-1} + 1 \otimes \left( \alpha K_2K_1^{-1} -q(q-q^{-1})K_2K_1^{-1}E \right) \ .\nonumber\label{coexp4}
\eeqa

The action of Hopf algebra homomorphisms $\Delta,\varepsilon$ and antiautomorphism $S$ of $U_q(gl_2)$ on the K-operators  is finally considered. 
Acting with (\ref{cpgl2}) (resp. (\ref{cugl2})) on the K-operators in  (\ref{solKexp}), one compares the resulting expressions with the entries $(\tilde{\cK}^{\pm,\alpha})_{i,j}$ (resp. $(\cK^{0,\alpha})_{i,j}$ with (\ref{K0alpha})). One concludes:
\beqa
\Delta(\cK^{+,\alpha}) &=&  (\bar \cL^{0,-})_{[2]}(\cK^{+,\alpha})_{[1]}(\cL^{+})_{[2]}   \ ,\qquad \Delta(\cK^{-,\alpha})  
 = (\cL^{0,+})_{[2]}(\cK^{-,\alpha})_{[1]}(\cL^{-})_{[2]}\ ,\label{coKpmgl}\\
\varepsilon(\cK^{\pm,\alpha}) &=& \cK^{0,\alpha} \ \label{cuKpmgl}
\eeqa
Furthermore, using (\ref{agl2}) one checks the compatibility property:
\beqa
\mu \circ(id \otimes S) \circ \Delta(\cK^{\pm,\alpha}) = \mu \circ(S \otimes id) \circ \Delta(\cK^{\pm,\alpha}) = \eta \circ \varepsilon(\cK^{\pm,\alpha}) \ . \label{aKpmgl}
\eeqa
%
%In the next section, we consider the restrictions of the K-operators (\ref{solKexp}) in $U_q(sl_2)$ for $\alpha=0$ and $\alpha=1$.
%\vspace{1mm}
 
\subsection{Freidel-Maillet type presentations of $U_q(sl_2)$}
Consider the restriction in $U_q(sl_2)$ of the one-parameter family of K-operators (\ref{solKexp}), using (\ref{isoglsl}). They are denoted 
$(\cK^{\pm,\alpha})|_{U_q(sl_2)}$.
% images in $U_q(sl_2)$ of the one-parameter family of K-operators (\ref{solKexp}) using (\ref{isoglsl}), denoted $(\cK^{\pm,\alpha})|_{U_q(sl_2)}$.  
For the specialization $\alpha=0$, 
the entries of $(\cK^{\pm,0})|_{U_q(sl_2)}$ are expressed in terms of the Chevalley generators $E,F,K^{\pm1}$. For $\alpha=1$, applying $\phi^{-1}$ with (\ref{iso1}) the entries are expressed in terms of the equitable generators $X^{\pm 1},Y,Z$. Accordingly, introduce the notation:
\beqa
\cK_c^\pm \equiv (\cK^{\pm,0})|_{U_q(sl_2)} \qquad \mbox{and} \qquad   \cK_e^\pm \equiv \phi^{-1}((\cK^{\pm,1})|_{U_q(sl_2)})\ .\label{notK}
\eeqa
By Lemma \ref{lemRKR0K}, they satisfy the Freidel-Maillet equations (\ref{FMKKgl}), (\ref{FMKKpgl}). Furthermore:
\begin{thm} \label{thm:isoUslFM}

 $U_q(sl_2)$ admits a presentation of Freidel-Maillet type. The defining relations are given in a matrix form by:
\beqa 
\cK^\pm_{1,1}(\cK^\pm_{1,1})^{-1}&=& (\cK^\pm_{1,1})^{-1}\cK^\pm_{1,1} =1 \ ,\label{FM1}\\
R\ (\cK^\pm\otimes {\mathbb I})\ R^{(0)}\ ({\mathbb I} \otimes \cK^\pm)\
&=& \ ({\mathbb I} \otimes \cK^\pm)\  R^{(0)}\ (\cK^\pm\otimes {\mathbb I})\ R\ ,\label{FMpp}\\
R\ (\cK^+\otimes {\mathbb I})\ R^{(0)}\ ({\mathbb I} \otimes \cK^-)\
&=& \ ({\mathbb I} \otimes \cK^-)\ R^{(0)}\ (\cK^+\otimes {\mathbb I})\ R\ ,
\label{FMpm} 
\eeqa
where $\cK^\pm \equiv \cK^\pm_c$ or $\cK^\pm \equiv \cK^\pm_e$ is a square matrix with entries $\cK^\pm_{i,j}$, $i,j=1,2$, such that
\beqa
 \cK^+_c=\begin{pmatrix} K & (q-q^{-1})KF   \\
   0 & K^{-1}
      \end{pmatrix} \ ,\qquad  \cK^-_c=\begin{pmatrix} 1 & 0   \\
   -q(q-q^{-1})K^{-1}E &  1\label{Kpmc}
      \end{pmatrix} \ 
\eeqa
and
\beqa
 \cK^+_e=\begin{pmatrix} X & XY-1   \\
   1 & Y
      \end{pmatrix} \ ,\qquad  \cK^-_e=\begin{pmatrix} 1 & 0   \\
   Z &  1\label{Kpme}
      \end{pmatrix} \ .
\eeqa
We call $\cK^\pm_c$ and $\cK^\pm_e$ the Chevalley and equitable K-operators of $U_q(sl_2)$, respectively. 
\end{thm}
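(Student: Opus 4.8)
The plan is to establish the statement as an isomorphism, realising $U_q(sl_2)$ and the algebra abstractly presented by \er{FM1}--\er{FMpm} as mutually inverse quotients of one another. Throughout, $\cK^\pm$ denotes either the Chevalley operators \er{Kpmc} or the equitable operators \er{Kpme}.

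First I would check that the explicit operators satisfy the defining relations, which is the short direction. Relation \er{FM1} is immediate: $(\cK^+_c)_{1,1}=K$ and $(\cK^+_e)_{1,1}=X$ are invertible, while $(\cK^-_c)_{1,1}=(\cK^-_e)_{1,1}=1$. For \er{FMpp} and \er{FMpm}, recall from \er{notK} that the Chevalley operators are the restriction to $U_q(sl_2)$ of $\cK^{\pm,0}$ and the equitable operators are $\phi^{-1}$ applied entrywise to the restriction of $\cK^{\pm,1}$. Since $R$ of \er{Rgl} and $R^{(0)}$ of \er{R0} have scalar entries, both restricting to the subalgebra generated by \er{isoglsl} and applying the algebra isomorphism $\phi^{-1}$ preserve the matrix identities \er{FMKKgl}, \er{FMKKpgl}; hence \lemref{lemRKR0K} yields \er{FMpp} and \er{FMpm} for $\cK^\pm_c$ and $\cK^\pm_e$. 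Consequently the assignment of the abstract matrix entries to the explicit elements of \er{Kpmc} (resp. \er{Kpme}) defines an algebra homomorphism $\Psi$ from the Freidel-Maillet algebra to $U_q(sl_2)$, and $\Psi$ is surjective because $K,K^{-1},F,E$ (resp. $X,X^{-1},Y,Z$) are recovered as entries or simple rescalings of entries.

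The substance is to invert $\Psi$, i.e. to show that \er{FM1}--\er{FMpm} already force the defining relations of $U_q(sl_2)$ among the entries. Here I would expand the two $4\times4$ matrix equations componentwise using $R$ and $R^{(0)}$. In the Chevalley case I take the shape $\cK^+=\left(\begin{smallmatrix} a & b\\ 0 & a^{-1}\end{smallmatrix}\right)$, $\cK^-=\left(\begin{smallmatrix} 1 & 0\\ c & 1\end{smallmatrix}\right)$ (with $aa^{-1}=a^{-1}a=1$ being \er{FM1}), identify $a=K$, and set $F:=(q-q^{-1})^{-1}a^{-1}b$ and $E:=-q^{-1}(q-q^{-1})^{-1}ac$. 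I expect the genuine relations to distribute as follows: the off-diagonal components of \er{FMpp} for $\cK^+$ reproduce the conjugation $KFK^{-1}=q^{-2}F$, those of the cross equation \er{FMpm} yield $KEK^{-1}=q^2E$ together with the commutator $[E,F]=\frac{K-K^{-1}}{q-q^{-1}}$, while \er{FMpp} for $\cK^-$ is a mere consistency condition; collectively this gives exactly \er{eq1}, \er{eq3}, \er{eq4} and nothing more. For the equitable case I would run the same componentwise expansion on \er{Kpme}: the cyclic symmetry of the ansatz $\cK^+_e=\left(\begin{smallmatrix} X & XY-1\\ 1 & Y\end{smallmatrix}\right)$, $\cK^-_e=\left(\begin{smallmatrix} 1 & 0\\ Z & 1\end{smallmatrix}\right)$ should make the three $q$-commutator relations \er{eq:e4} drop out symmetrically from the off-diagonal components, with \er{eq:e1} coming from \er{FM1}. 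In either presentation this produces a homomorphism $\Phi\colon U_q(sl_2)\to$ (Freidel-Maillet algebra) that is a two-sided inverse of $\Psi$ on generators, giving the asserted isomorphism; consistency between the two presentations is guaranteed in advance by the entrywise matching $\phi(\cK^\pm_e)=\cK^{\pm,1}|_{U_q(sl_2)}$.

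I expect the main obstacle to be the bookkeeping in the componentwise expansion: the products $R\,(\cK^\pm\otimes{\mathbb I})\,R^{(0)}\,({\mathbb I}\otimes\cK^\pm)$ are $4\times4$ matrices with noncommutative entries, most of the sixteen resulting equations are trivial or redundant, and isolating the genuine relations — in particular pinning down the precise $q$-dependent coefficient $\frac{K-K^{-1}}{q-q^{-1}}$ in \er{eq4} out of \er{FMpm}, and verifying that the three equitable relations appear with the correct normalisation — requires careful tracking of the factors contributed by \er{Rgl} and \er{R0}. A useful safeguard keeping the computation honest is that the factorisations $\cK^+_c=\bar\cL^{0,-}\cL^{+}$ and $\cK^-_c=\cL^{0,+}\cL^{-}$ underlying \er{solKexp}, together with the intertwining relations \er{L0R0L}, \er{LR0L0} exploited in \lemref{lemRKR0K}, certify beforehand that the extracted relations are mutually consistent, so the expansion should yield no contradictory constraints.
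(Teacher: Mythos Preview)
Your proposal is correct and follows essentially the same approach as the paper: direct componentwise verification that the Freidel-Maillet relations \er{FM1}--\er{FMpm} for the specific K-operators \er{Kpmc} or \er{Kpme} are equivalent to the defining relations of $U_q(sl_2)$. The paper's proof is simply the terse statement that \er{FM1} gives \er{eq1} (resp.\ \er{eq:e1}) and that inserting \er{Kpmc} or \er{Kpme} into \er{FMpp}--\er{FMpm} yields a system equivalent to the remaining relations of Definition~\ref{def:uq} or Theorem~\ref{thm1}; your version structures the same computation as building mutually inverse homomorphisms and invokes \lemref{lemRKR0K} for the forward direction, but the substance is identical.
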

\begin{proof} Eq. (\ref{FM1}) is equivalent to (\ref{eq1}) or (\ref{eq:e1}). Insert (\ref{Kpmc}) or (\ref{Kpme}) into (\ref{FM1})-(\ref{FMpm}). It is checked that the resulting system of equations is equivalent to the remaining relations given in Definition \ref{def:uq} or Theorem \ref{thm1}.
\end{proof}
Note that (\ref{FMpp}), (\ref{FMpm}) imply
\beqa
R_{21}^{-1}\ (\cK^\pm \otimes {\mathbb I})\ R^{(0)}\ ({\mathbb I} \otimes \cK^\pm)\
= \ ({\mathbb I} \otimes \cK^+)\ R^{(0)}\ (\cK^-\otimes {\mathbb I})\ R_{21}^{-1}\ ,\label{RmKKpp}\\
R_{21}^{-1}\ (\cK^-\otimes {\mathbb I})\ R^{(0)}\ ({\mathbb I} \otimes \cK^+)\
= \ ({\mathbb I} \otimes \cK^+)\ R^{(0)}\ (\cK^-\otimes {\mathbb I})\ R_{21}^{-1}\ .\label{RmKmKp}  
\eeqa
\begin{rem} The inverse K-operators are given by:
\beqa
 (\cK_c^+)^{-1}=\begin{pmatrix} K^{-1} & -(q-q^{-1})FK   \\
   0 & K
      \end{pmatrix} \ ,\qquad  (\cK_c^-)^{-1}=\begin{pmatrix} 1 & 0   \\
   q^{-1}(q-q^{-1})E K^{-1}& 1
      \end{pmatrix} \ \label{Kopc}
\eeqa
and
\beqa
 (\cK_e^+)^{-1}=\begin{pmatrix} Y & 1 -YX  \\
   -1 & X
      \end{pmatrix} \ ,\qquad  (\cK_e^-)^{-1}=\begin{pmatrix} 1 & 0   \\
   -Z & 1
      \end{pmatrix} \ .\label{Kope}
\eeqa
\end{rem}

By analogy with (\ref{invRTT1}), (\ref{invRTT2}),
other examples of Freidel-Maillet equations can be derived from (\ref{FMpp})-(\ref{FMpm}).
\begin{prop} 
The following relations hold:
\beqa
R_{21} (\cK^\pm)^{-1}_1 (R^{(0)})^{-1}(\cK^\pm)^{-1}_2 &=&  (\cK^\pm)^{-1}_2 (R^{(0)})^{-1} (\cK^\pm)^{-1}_1 R_{21} \ ,\\
R_{21} (\cK^-)^{-1}_1 (R^{(0)})^{-1}(\cK^+)^{-1}_2 &=&  (\cK^+)^{-1}_2 (R^{(0)})^{-1} (\cK^-)^{-1}_1 R_{21} \ .
\eeqa
%
%and  \pascal{not correct bien que RTT ok maple}
%
%\beqa
%R^{-1} ((K^\pm)^t)^{-1}_1 R^{(0)}((K^\pm)^t)^{-1}_2 &=&  ((K^\pm)^t)^{-1}_2 R^{(0)} ((K^\pm)^t)^{-1}_1 R^{-1} \ ,\\
%R^{-1} ((K^-)^t)^{-1}_1 R^{(0)}((K^+)^t)^{-1}_2 &=&  ((K^+)^t)^{-1}_2 R^{(0)}((K^-)^t)^{-1}_1 R \ .
%\eeqa
%
\end{prop}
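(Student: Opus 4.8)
The plan is to imitate the derivation of \er{invRTT1}--\er{invRTT2} from the Yang--Baxter algebra, carried out in \cite[Proposition 2.1]{DF93}, now applied to the Freidel--Maillet relations \er{FMpp} and \er{FMpm}. The structural inputs needed are minimal: that $R$, $R^{(0)}$ and the K-operators are all invertible (the explicit inverses $(\cK^\pm)^{-1}$ are recorded in \er{Kopc}--\er{Kope}), that $PRP=R_{21}$ as noted before \thmref{thm:isoUglFRT}, and the $P$-symmetry $PR^{(0)}P=R^{(0)}$ of the diagonal matrix \er{R0}. The whole argument is a three-step manipulation: invert, clear the stray factors of $R^{-1}$, then conjugate by the permutation $P$.

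Concretely, for the $\pm$ case I would start from \er{FMpp} in the form $R\,(\cK^\pm)_1 R^{(0)} (\cK^\pm)_2 = (\cK^\pm)_2 R^{(0)} (\cK^\pm)_1 R$ and invert both sides as matrices over $U_q(sl_2)$, using $(AB)^{-1}=B^{-1}A^{-1}$ and the invertibility of each factor, to obtain
\beqa
(\cK^\pm)^{-1}_2 (R^{(0)})^{-1} (\cK^\pm)^{-1}_1 R^{-1} &=& R^{-1} (\cK^\pm)^{-1}_1 (R^{(0)})^{-1} (\cK^\pm)^{-1}_2\ .\nonumber
\eeqa
Multiplying by $R$ on the left and on the right removes the two factors $R^{-1}$ and gives
\beqa
R\, (\cK^\pm)^{-1}_2 (R^{(0)})^{-1} (\cK^\pm)^{-1}_1 &=& (\cK^\pm)^{-1}_1 (R^{(0)})^{-1} (\cK^\pm)^{-1}_2\, R\ .\nonumber
\eeqa
The final step is to conjugate this identity by $P$: under $P(\cdot)P$ one has $R\mapsto R_{21}$, while $(R^{(0)})^{-1}$ is fixed by the $P$-symmetry above, and $P$ interchanges the two tensor factors so that $(\cK^\pm)^{-1}_1 \leftrightarrow (\cK^\pm)^{-1}_2$. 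This turns the previous display into exactly the first asserted relation. The second relation follows by the identical three-step procedure applied to the mixed equation \er{FMpm}, starting from $R\,(\cK^+)_1 R^{(0)} (\cK^-)_2 = (\cK^-)_2 R^{(0)} (\cK^+)_1 R$; the superscripts $+,-$ simply track the positions of $(\cK^+)_1$ and $(\cK^-)_2$ throughout, and after conjugation by $P$ they land in the configuration stated.

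I do not anticipate a genuine obstacle, since every manipulation is a formal identity over $U_q(sl_2)$; the only points deserving care are two bookkeeping checks. The first is that the entrywise matrix inversion is legitimate over the noncommutative ring $U_q(sl_2)$, which is guaranteed precisely because the explicit inverses in \er{Kopc}--\er{Kope} exist, so each factor $(\cK^\pm)_1,(\cK^\pm)_2$ is invertible. The second is the verification of $PR^{(0)}P=R^{(0)}$, i.e.\ that conjugation by $P$ fixes $R^{(0)}$; this is immediate from \er{R0} because its $(2,2)$ and $(3,3)$ entries both equal $q^{-1}$, so permuting the two auxiliary spaces leaves the diagonal matrix (and hence its inverse) unchanged. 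With these two checks in place the derivation is purely mechanical.
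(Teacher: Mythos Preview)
Your argument is correct and matches the paper's intended approach: the proposition is stated without proof, preceded only by the remark that these relations ``can be derived from \er{FMpp}--\er{FMpm}'' by analogy with \er{invRTT1}--\er{invRTT2} (cf.\ \cite[Proposition 2.1]{DF93}). Your three-step manipulation (invert, clear $R^{-1}$, conjugate by $P$ using $PR^{(0)}P=R^{(0)}$) is precisely that analogy made explicit.
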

Note that the Chevalley K-operators have a triangular structure, contrary to one of the two equitable K-operators. \vspace{1mm}

The Freidel-Maillet type presentation of Theorem \ref{thm:isoUslFM} is not unique. For instance, the relation (\ref{eq:e1}) is not necessary to show (\ref{FMpp}), (\ref{FMpm}), given the K-operators (\ref{Kpme}): only the relations (\ref{eq:e4}) are needed. Because these latter relations remain unchanged by the rotation $r: \ X \rightarrow Y$, $Y \rightarrow Z$,  $Z \rightarrow X$, other expressions for K-operators obtained from (\ref{Kpme}) can be considered as well. For instance, denote $r(\cK_e^{+})\equiv \cK_{\cal B}$ and $r(\cK_e^{-})\equiv \cK_{X}$. It follows:
\begin{rem} A Freidel-Maillet type presentation of  $U_q(sl_2)$ is given by:
\beqa 
(\cK_a)_{2,1}(\cK_a)_{2,1}^{-1}&=& (\cK_a)_{2,1}^{-1}(\cK_a)_{2,1} =1 \ ,\label{FMBorel1}\\
R\ (\cK_a\otimes {\mathbb I})\ R^{(0)}\ ({\mathbb I} \otimes \cK_a)\
&=& \ ({\mathbb I} \otimes \cK_a)\  R^{(0)}\ (\cK_a\otimes {\mathbb I})\ R\ \quad \mbox{for}\qquad  a=X,{\cal B},\label{FMBorelBBXX}\\
R\ (\cK_{\cal B}\otimes {\mathbb I})\ R^{(0)}\ ({\mathbb I} \otimes \cK_X)\
&=& \ ({\mathbb I} \otimes \cK_X)\ R^{(0)}\ (\cK_{\cal B}\otimes {\mathbb I})\ R\ ,
\label{FMBorelBX} 
\eeqa
where $\cK_{\cal B}, \cK_X$ are square matrices such that
\beqa
 \cK_{\cal B}=\begin{pmatrix} Y & YZ-1   \\
   1 & Z
      \end{pmatrix} \ ,\qquad  \cK_X=\begin{pmatrix} 1 & 0   \\
   X &  1\label{Kborel}
      \end{pmatrix} \ .
\eeqa
\end{rem}

As mentionned in the introduction, the subalgebra $\cal B$ generated by $\{Y,Z\}$ in the equitable presentation of $U_q(sl_2)$ provides the simplest example of non-homogeneous Borel subalgebras \cite{LV17,Vo,LV19}. This is seen from a direct comparison between (\ref{iso1}) and \cite[Example 3.6]{LV19}. According to previous remark, $\cal B$ is isomorphic to the Freidel-Maillet algebra:
\beqa
R\ (\cK_{\cal B}\otimes {\mathbb I})\ R^{(0)}\ ({\mathbb I} \otimes \cK_{\cal B})\
&=& \ ({\mathbb I} \otimes \cK_{\cal B})\  R^{(0)}\ (\cK_{\cal B}\otimes {\mathbb I})\ R\ .\label{FMKKBorel}
\eeqa

Another Freidel-Maillet type presentation is obtained by modifying the R-matrix $R^{(0)}$: 
\begin{rem} A Freidel-Maillet type presentation of  $U_q(sl_2)$ is obtained from (\ref{FM1})-(\ref{FMpm}) with the substitution $R^{(0)} \rightarrow (R^{(0)})^{-1}$, $\cK^\pm_{1,1}\rightarrow \cK^\pm_{2,2}$ and the Chevalley and equitable K-operators given, respectively, by:
\beqa
 \cK_c^+ \rightarrow \begin{pmatrix} 1 & (q-q^{-1})F  \\
   0 & 1
      \end{pmatrix} \ ,\qquad  \cK_c^- \rightarrow \begin{pmatrix} K^{-1} & 0   \\
  -q(q-q^{-1})KE & K
      \end{pmatrix} \ \label{Kopcbis}
\eeqa
and
\beqa
 \cK_e^+ \rightarrow \begin{pmatrix} 1 & Y  \\
   0 & 1
      \end{pmatrix} \ ,\qquad  \cK_e^- \rightarrow \begin{pmatrix} Z & 1   \\
   XZ-1 & X
      \end{pmatrix} \ .\label{Kopebis}
\eeqa
\end{rem}

For the FRT presentation of $U'_q(sl_2)$ introduced at the end of Subsection \ref{sec31}, the Hopf algebra structure is characterized by a comultiplication, counit and antipode that follow from the restriction of (\ref{ruleFRT}) to $U'_q(sl_2)$. A characterization of the Hopf algebra structure of $U_q(sl_2)$ is obtained by combining the Freidel-Maillet type presentation of Theorem \ref{thm:isoUslFM} and the FRT presentation of $U'_q(sl_2)$ extended by the `reduced' L-operators (\ref{redL}). The comultiplication (\ref{copc}) or (\ref{cope})  and the counit (\ref{cuc}), (\ref{cue}) are obtained as follows. Denote by $M|_{U_q(sl_2)}$ the restriction of the operator $M$ to $U_q(sl_2)$.
%

%For a Hopf algebra, the non-cocommutativity property is characterized by the $R$-matrix which intertwines the comultiplication $\Delta$ and
%the opposite commultiplication $P \circ \Delta$.  In the FRT presentation of $U'_q(sl_2)$, the comultiplication rule follows from (\ref{ruleFRT}).  For the Freidel-Maillet algebra (\ref{FMKKgl})-(\ref{FMKKpgl}) extended by the FRT algebra $U_q(gl_2)$, recall  Lemma \ref{lemcopgl} where a coaction map is introduced. Combining the Freidel-Maillet presentation of $U_q(sl_2)$ given by Theorem \ref{thm:isoUslFM} and the FRT presentation of $U'_q(sl_2)$ with (\ref{Lpm}) - see end of Subsection \ref{sec31} - extended by (\ref{redL}), part of the Hopf algebra structure of $U_q(sl_2)$ is recovered as follows. 
% of $\Delta:U_q(sl_2) \rightarrow U_q(sl_2) \otimes U_q(sl_2)$ defined by (\ref{copc}) or (\ref{cope}) readily follows from specializations of Lemma \ref{lemcopgl}. Also, recall the counit $\varepsilon: U_q(sl_2) \rightarrow {\mathbb K} $ given by . 

\begin{prop} 
 
The comultiplication $\Delta: U_q(sl_2) \rightarrow U_q(sl_2) \otimes U_q(sl_2)$ is such that
\beqa
&& \Delta(\cK_c^+)=\left((\bar \cL^{0,-})_{[2]}(\cK_c^{+})_{[1]}(\cL^{+})_{[2]}\right)|_{U_q(sl_2)_{[2]}}   \ ,\qquad \Delta(\cK_c^{-}) =  \left((\cL^{0,+})_{[2]}(\cK_c^{-})_{[1]}(\cL^{-})_{[2]}\right)|_{U_q(sl_2)_{[2]}} \ , \label{DeltaKcpm}
\eeqa
\beqa
\Delta(\cK_e^+)&=& \left(id \otimes \phi^{-1}) \circ (\bar \cL^{0,-})_{[2]}(\cK_e^{+})_{[1]}(\cL^{+})_{[2]}\right)|_{U_q(sl_2)_{[2]}}   \ , \label{DeltaKepm}\\ 
\Delta(\cK_e^{-}) &=& (id \otimes \phi^{-1}) \circ \left((\cL^{0,+})_{[2]}(\cK_e^{-})_{[1]}(\cL^{-})_{[2]}\right)|_{U_q(sl_2)_{[2]}}\ .\nonumber 
\eeqa
%
%for $K^\pm \in\{K^\pm_c,K^\pm_e\}$.
%\vspace{1mm}

The counit $\varepsilon: U_q(sl_2) \rightarrow {\mathbb K}$ is such that
\beqa
\varepsilon(\cK_c^\pm)=\cK^{0,0} \qquad \mbox{and}  \qquad \varepsilon(\cK_e^\pm)=\cK^{0,1}\  %\nonumber
\label{cuKcepm}
\eeqa
with (\ref{K0alpha}).
\end{prop}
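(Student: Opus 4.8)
The plan is to derive every formula by restricting to $U_q(sl_2)$ the $U_q(gl_2)$-level identities \er{coKpmgl} and \er{cuKpmgl} already established for the dressed K-operators $\cK^{\pm,\alpha}$, and, for the equitable K-operators, by transporting the Chevalley result through the isomorphism $\phi$ of \er{iso1}. Recall from \er{notK} that $\cK_c^\pm=(\cK^{\pm,0})|_{U_q(sl_2)}$ and $\cK_e^\pm=\phi^{-1}((\cK^{\pm,1})|_{U_q(sl_2)})$, and that by the observation after \er{solKexp} all entries of $\cK^{\pm,\alpha}$ lie in the subalgebra \er{isoglsl}, so that the relevant maps descend along the quotient $K_1K_2=1$.

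For the Chevalley comultiplication I would first check that $\Delta$ of $U_q(gl_2)$, given by \er{cpgl2}, descends to \er{copc} under $K_1K_2=1$: here $K_1K_2^{-1}\to K$ and $K_1^{-1}K_2\to K^{-1}$, so $\Delta(E)=E\otimes 1+K_1K_2^{-1}\otimes E\to E\otimes 1+K\otimes E$, and similarly for $F$ and $K$. Consequently the identity \er{coKpmgl} at $\alpha=0$, restricted to $U_q(sl_2)$, is exactly \er{DeltaKcpm}; the only thing to verify is that the half-integer powers $K^{\pm 1/2}$ carried by the dressing operators ($\bar\cL^{0,-},\cL^{+}$, resp.\ $\cL^{0,+},\cL^{-}$) through \er{Kdemi} recombine into integer powers of $K$ in the second tensor slot, which is precisely what the restriction $|_{U_q(sl_2)_{[2]}}$ records. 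The counit $\varepsilon(\cK_c^\pm)=\cK^{0,0}$ follows either by restricting \er{cuKpmgl} at $\alpha=0$ or, more directly, by evaluating \er{cuc} entrywise on \er{Kpmc}, using $\varepsilon(E)=\varepsilon(F)=0$ and $\varepsilon(K)=1$.

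For the equitable case the key input is that $\phi$ is a Hopf-algebra isomorphism, i.e.\ the coproduct \er{cope} of the equitable presentation and the coproduct \er{copc} of the Chevalley one are related by $(\phi\otimes\phi)\circ\Delta_{\mathrm{eq}}=\Delta\circ\phi$; I would check this on generators, e.g.\ $\Delta(\phi(Y))=\phi(Y)\otimes K^{-1}+1\otimes(q-q^{-1})F=(\phi\otimes\phi)\big((Y-1)\otimes X^{-1}+1\otimes Y\big)$ via \er{cope}, and likewise for $X,Z$. Applying $\phi^{-1}\otimes\phi^{-1}$ to the Chevalley identity written for $(\cK^{\pm,1})|_{U_q(sl_2)}=\phi(\cK_e^\pm)$, the $\phi^{-1}$ in the first slot simply rewrites $(\phi(\cK_e^\pm))_{k\ell}$ as the equitable entries $(\cK_e^\pm)_{k\ell}$, so that only the second slot---carrying the Chevalley dressing---still needs to be converted; this produces precisely the asymmetric twist $id\otimes\phi^{-1}$ of \er{DeltaKepm}. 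The counit $\varepsilon(\cK_e^\pm)=\cK^{0,1}$ is then immediate from $\varepsilon(X)=\varepsilon(Y)=\varepsilon(Z)=1$ in \er{cue} applied entrywise to \er{Kpme}, noting that the off-diagonal entry gives $\varepsilon(XY-1)=0$.

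The main obstacle is bookkeeping rather than conceptual: one must keep track of which presentation occupies which tensor factor in the matrix product $(\bar\cL^{0,-})_{[2]}(\cK_e^{+})_{[1]}(\cL^{+})_{[2]}$ (and its $-$ analogue)---equitable in the first, Chevalley in the second---so as to justify the asymmetric $id\otimes\phi^{-1}$ rather than $\phi^{-1}\otimes\phi^{-1}$, and one must confirm that the $K^{\pm 1/2}$ powers from the dressing cancel so that the right-hand sides genuinely land in $U_q(sl_2)\otimes U_q(sl_2)$; both points are settled by the entrywise computation once \er{coKpmgl} is in hand.
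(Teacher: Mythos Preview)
Your proposal is correct and follows essentially the same route as the paper: restrict the $U_q(gl_2)$ identities \er{coKpmgl}, \er{cuKpmgl} at $\alpha=0$ for the Chevalley case, and at $\alpha=1$ apply $\phi^{-1}\otimes\phi^{-1}$ for the equitable case. Your additional explanations---that $\phi$ intertwines the two coproducts, that the $\phi^{-1}$ on the first tensor factor is absorbed by rewriting $\phi(\cK_e^\pm)$ as $\cK_e^\pm$, and that the half-integer $K$-powers recombine under the restriction---make explicit the bookkeeping the paper leaves to the reader.
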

\begin{proof}   Firstly, we show that (\ref{DeltaKcpm}), (\ref{DeltaKepm}) are equivalent to the 
comultiplication rules (\ref{copc}), (\ref{cope}). For the specialization $\alpha=0$, one takes the restriction  of (\ref{coKpmgl}) to $U_q(sl_2) \otimes U_q(sl_2)$ and uses  (\ref{notK}). For the specialization $\alpha=1$, in addition one acts on the restriction of (\ref{coKpmgl}) with $\phi^{-1} \otimes \phi^{-1}$. 
%In both cases, compare the corresponding entries  with the image of (\ref{Kpmc}) (resp. (\ref{Kpme}))  by the comultiplication according to   (\ref{copc}) (resp. (\ref{cope})).
%Consider the image of (\ref{copgl}) in $U_q(sl_2) \otimes U_q(sl_2)$ using (\ref{isoglsl}).  For the specialization $\alpha=0$, compare the result with the image of (\ref{Kpmc}) by (\ref{copc}). For the specialization $\alpha=1$, compute $(id \times \phi^{-1})\circ \delta(\cK^{\pm,1})|_{U_q(sl_2)_{[1]} \otimes U_q(sl_2)_{[2]}}$ where (\ref{iso1}). Then, compare the result with the image of (\ref{Kpme}) by (\ref{cope}). 
Secondly, (\ref{cuKcepm}) produces the counit rules (\ref{cuc}), (\ref{cue}): this follows from (\ref{cuKpmgl}) and (\ref{notK}).
% Recall $\cK^{0,\alpha}$ is given by (\ref{K0alpha}). For $\alpha=0$ (resp. $\alpha=1$), compare the entries of this matrix with the image of (\ref{Kpmc}) (resp. (\ref{Kpme})) by the counit according to  (\ref{cuc}) (resp. (\ref{cue})).
% satisfies (\ref{FMpp}). Compare $\cK^{0,\alpha}$ for $\alpha=0$ (resp. $\alpha=1$) to the image of (\ref{Kpmc}) (resp. (\ref{Kpme}))  by  (\ref{cuc}) (resp. (\ref{cue})). 
\end{proof}

In addition, the compatibility property satisfied by the antipode $S: U_q(sl_2) \rightarrow U_q(sl_2)$ holds:
\beqa
\mu \circ(id \otimes S) \circ \Delta(\cK^\pm) = \mu \circ(S \otimes id) \circ \Delta(\cK^\pm) = \eta \circ \varepsilon(\cK^\pm) \ \quad \mbox{for}\quad  \mbox{$\cK\equiv \cK_c$ or $\cK_e$}. \label{aKpmsl}
\eeqa
It follows from the restriction of (\ref{aKpmgl}) to $U_q(sl_2)$. Note that contrary to the FRT presentation with (\ref{ruleFRT}), the comultiplication rules  (\ref{DeltaKcpm}), (\ref{DeltaKepm}) have no group-like structure. So, the antipode of a Chevalley or equitable K-operator (i.e. $S(\cK_c^\pm)$ and $S(\cK_e^\pm)$) computed using (\ref{ac}), (\ref{ae}), doesn't take a simple form in terms of L and K-operators. \vspace{1mm}

To conclude this subsection, let us mention  that the central element of $U_q(sl_2)$ given by (\ref{Centc}) or equivalently (\ref{Cente}) can be derived from the Freidel-Maillet quantum determinant analog of (\ref{qdet2}). Namely, 
\beqa
{\rm det}_q^{FM} &=& {\rm tr}_{12}(U_{12} \cK_1^+ R^{(0)} \cK_2^-) =  - q^{-1}(q-q^{-1})^2\Omega_{c}  = - q^{-1}\Omega_{e}\ .\label{qdetFM}
\eeqa
Note that the central element  can be alternatively derived from the so-called quantum trace \cite[p. 46]{F1} with (\ref{Kopcbis}), (\ref{Kopebis}):
\beqa
{\rm tr}_q(\cK^+(\cK^-)^{-1}) = tr(D\cK^+(\cK^-)^{-1}) = (q-q^{-1})^2\Omega_{c}=\Omega_{e} \qquad \mbox{with} \qquad D = diag(q,q^{-1})\ .  
\eeqa

\subsection{K-operators of $U_q(sl_2)$ as intertwiners}\label{sec:intert}
The quasi-cocommutativity property of $U_q(sl_2)$ according to the Chevalley presentation is characterized by the universal R-matrix $\mathfrak{R} \in U_q(sl_2) \otimes U_q(sl_2)$ \cite{Dr0}. With respect to the comultiplication $\Delta$ and opposite comultiplication $\Delta'=\sigma \circ \Delta$, $\sigma(x \otimes y) = y \otimes x$, $\mathfrak{R}$  is the isomorphism such that $\Delta(x) = \mathfrak{R} \Delta'(x) \mathfrak{R}^{-1} $ for $x\in E,F,K^{\pm 1}$. It implies that the L-operators of $U'_q(sl_2)$ satisfy the intertwining relations %\cite{Dr0}
 (see e.g. \cite{DEFK09}):
\beqa
(\rho_{V^{(1/2)}} \times id)\Delta(x) \cL^\pm|_{U'_q(sl_2)} =  \cL^\pm|_{U'_q(sl_2)} (\rho_{V^{(1/2)}} \otimes id)\Delta'(x) \quad \mbox{for any} \quad x\in K^{\pm 1/2}, E,F\ , 
\eeqa
as can be easily checked using (\ref{copc}),  (\ref{Lpm}) with (\ref{Kdemi}) and (\ref{repec}). In particular, R-matrices that solve the Yang-Baxter equation (\ref{YB0}) can be derived from $R_{12}=(\rho_1 \otimes \rho_2)(\mathfrak{R})$, where $\rho_i: U_q(sl_2) \rightarrow \End({\cal V}_i)$.  \vspace{1mm}

By analogy, it is expected that K-operators of $U_q(sl_2)$ satisfy certain intertwining relations.
Assume there exists a map $ \tilde\delta: U_q(sl_2) \rightarrow U_q(sl_2) \otimes U_q(sl_2)$ such that:
\beqa
(\rho_{V^{(1/2)}}\otimes id)\tilde\delta(K^{\pm 1})&=& \begin{pmatrix} q^{\pm 1}K^{\pm 1} & 0   \\
   0 & q^{\mp 1}K^{\pm 1} \end{pmatrix} \ ,\label{dc1}\\
(\rho_{V^{(1/2)}}\otimes id)\tilde\delta(E)&=& \begin{pmatrix} q^2E & K   \\
   0 & q^{-2}E \end{pmatrix}\ ,\quad (\rho_{V^{(1/2)}}\otimes id)\tilde\delta(F)= \begin{pmatrix} q^{-1}F & 0   \\
   K^{-2} & qF \end{pmatrix}\ ,\label{dc2}
\eeqa
for the Chevalley presentation, and
\beqa
&& (\rho_{V^{(1/2)}}\otimes id)\tilde\delta(X)= \begin{pmatrix} qX & 0   \\
   q-q^{-1} & q^{-1}X \end{pmatrix} \ ,\quad \ \ \ 
	(\rho_{V^{(1/2)}}\otimes id)\tilde\delta(X^{-1})= \begin{pmatrix} q^{-1}X^{-1} & 0   \\
   -(q-q^{-1})X^{-2} & qX^{-1} \end{pmatrix} \ ,\label{de1}\\
&& (\rho_{V^{(1/2)}}\otimes id)\tilde\delta(Y)= \begin{pmatrix} q^{-1}Y & 0   \\
0 & qY \end{pmatrix} \ ,\quad \quad \quad \quad 	
(\rho_{V^{(1/2)}}\otimes id)\tilde\delta(Z)= \begin{pmatrix} qZ & q^{-1}-q   \\
   0 & q^{-1}Z \end{pmatrix} \ ,\label{de2}
\eeqa
for the equitable presentation. By straightforward calculations, it is found that the defining relations of $U_q(sl_2)$ hold for the substitution $x \rightarrow (\rho_{V^{(1/2)}}\otimes id)\tilde\delta(x) \in  \End({\mathbb C}^2) \otimes U_q(sl_2)$. Furthermore, one finds that the following intertwining relations are satisfied by the K-operators (\ref{Kpmc}), (\ref{Kpme}):
\beqa
(\rho_{V^{(1/2)}} \otimes id)\tilde\delta(x) \cK_c^\pm =  \cK_c^\pm (\rho_{V^{(1/2)}} \otimes id)\Delta'(x) \quad \mbox{for} \quad x\in K^{\pm 1}, E,F\ , 
\eeqa
and
\beqa
(\rho_{V^{(1/2)}} \otimes id)\tilde\delta(x) \cK_e^\pm =  \cK_e^\pm (\rho_{V^{(1/2)}} \otimes id)\Delta'(x) \quad \mbox{for} \quad x\in X^{\pm 1},Y,Z\ , 
\eeqa
where (\ref{repec}), (\ref{repeq}) are used. \vspace{1mm}

The above results suggest to introduce a universal K-matrix for $U_q(sl_2)$, characterizing the relation between $U_q(sl_2)$-modules associated with $\Delta'$ and $\tilde\delta$. From this perspective, the universal K-matrix ${\mathfrak K}\in U_q(sl_2) \otimes U_q(sl_2)$ would be an invertible element such that $ \tilde\delta(x) = \mathfrak{K} \Delta'(x) \mathfrak{K}^{-1}$ for all $x \in U_q(sl_2)$. 
For the Chevalley presentation, this problem is solved via the embedding $U_q(sl_2) \hookrightarrow U'_q(sl_2)$. Indeed, observe that $\cK^\pm = (\rho_{V^{(1/2)}} \otimes id)({\cal F}) \cL^\pm|_{U'_q(sl_2)}$ where $\cal F = q^{(H\otimes H)/2}$ is a Drinfeld twist \cite{Dr0,R90} such that $K\equiv q^H$. Then, $\cK^\pm$ are obtained by specializing $\mathfrak{K}$, where $\mathfrak{K}$ gives an example of `modified' universal R-matrix \cite[Proposition 3.7]{Parm}. It is easily checked that the map $\tilde\delta$ with (\ref{dc1}), (\ref{dc2}), follows from the twisted comultiplication. For the equitable case, finding the universal K-matrix is an open problem that will be considered elsewhere. 
In any case, for finite dimensional tensor product representations K-matrices satisfying the Freidel-Maillet equation (\ref{FMeqint}) would follow from the universal K-matrix. Examples of K-matrices are given in the next subsection.\vspace{1mm}

To conclude this subsection, let us mention that similar results hold for the non-homogeneous Borel subalgebra ${\cal B}$ with defining relations (\ref{FMKKBorel}) and K-operator $\cK_{\cal B}$ in (\ref{Kborel}). Assume there are two maps $\tilde\delta_{\cal B}$, $\delta'_{\cal B}: \cal B \rightarrow U_q(sl_2) \otimes {\cal B}$ such that:
\beqa
(\rho_{V^{(1/2)}}\otimes id)\delta'_{\cal B}(Y)&=& \begin{pmatrix} qY & 0   \\
   0 & q^{-1}Y \end{pmatrix} \ ,\quad \quad \quad (\rho_{V^{(1/2)}}\otimes id)\delta'_{\cal B}(Z)= \begin{pmatrix} q^{-1}Z & 0   \\
   q-q^{-1} & qZ \end{pmatrix}  \ ,\nonumber\\
	(\rho_{V^{(1/2)}}\otimes id)\tilde\delta_{\cal B}(Y)&=& \begin{pmatrix} qY & 0   \\
   q-q^{-1} & q^{-1}Y \end{pmatrix}
	\ ,\quad 
	(\rho_{V^{(1/2)}}\otimes id)\tilde\delta_{\cal B}(Z)= \begin{pmatrix} q^{-1}Z & 0   \\
   0 & qZ \end{pmatrix} \ .\nonumber
\eeqa
It is straightforward to check that the image of the defining relation $qYZ-q^{-1}ZY = q-q^{-1}$  by above maps holds. Furthermore,
the following intertwining relations are obtained:
\beqa
(\rho_{V^{(1/2)}} \otimes id)\tilde\delta_{\cal B}(x) \cK_{\cal B} =  \cK_{\cal B} (\rho_{V^{(1/2)}} \otimes id)\delta'_{\cal B}(x) \quad \mbox{for} \quad x\in Y,Z\ .
\eeqa 
This suggests to investigate the universal version of (\ref{FMKKBorel}) and its solutions from the approach based on the universal analog of the intertwining relations above. 

\subsection{Constant K-matrices of $U_q(sl_2)$}
It is well-known that the simplest (non-trivial) constant solutions of the Yang-Baxter equation (\ref{YB0}) can be derived from  (\ref{Lpm}) by taking the fundamental representation of $U'_q(sl_2)$. Using (\ref{repec}) one gets 
\beqa
R= q^{1/2}\rho_{V^{(1/2)}}(\cL^+|_{U'_q(sl_2)}) \qquad \mbox{and} \qquad  R_{21}^{-1}= q^{-1/2}\rho_{V^{(1/2)}}(\cL^-|_{U'_q(sl_2)}) \ .\label{RfromL} 
\eeqa
By analogy, the simplest non-trivial examples of constant K-matrices of $U_q(sl_2)$ satisfying the Freidel-Maillet equation
\begin{align}
R_{12}\cK_{13}R^{(0)}_{12}\cK_{23}=\cK_{23}R^{(0)}_{12}\cK_{13}R_{12}\ \label{FMscal}
\end{align}
are derived from  (\ref{Kpmc}), (\ref{Kpme}) and (\ref{Kborel}). For the one-dimensional (spin $s=0$) representation, the images of the Chevalley K-operators are trivial. However,  the image of the equitable K-operators (\ref{Kpme}) or (\ref{Kborel}) is not, given by:
\beqa
\rho_{V^{(0)}}(\cK_e^\pm) =\rho_{V^{(0)}}(\cK_{X})= \rho_{V^{(0)}}(\cK_{\cal B})=\begin{pmatrix} 1 & 0  \\
   1 & 1 
      \end{pmatrix} \ .
\eeqa

For the spin $s=1/2$ representation, for the Chevalley case using (\ref{repec}) one gets  $\rho_{V^{(1/2)}}(\cK_c^+)=R^{(0)}R$ and $\rho_{V^{(1/2)}}(\cK_c^-)=qR^{(0)}R_{21}^{-1}$.
%, a straighforward consequence of the embedding $U_q(sl_2) \hookrightarrow U'_q(sl_2)$. 
In this case, the K-matrices $\cK_c^\pm$ are just examples of constant solutions to braided Yang-Baxter equations.  For the equitable case, from (\ref{repeq}) one gets 
%
%\begin{align}
%\rho_{V^{(1/2)}}(\cK_c^+) =\left(
%\begin{array}{cccc} 
% q    & 0 & 0 & 0 \\
%0  &  q^{-1} & q^{-1}(q-q^{-1}) & 0 \\
%0  &  0 & q^{-1} &  0 \\
%0 & 0 & 0 & q
%\end{array} \right) \ ,\qquad \rho_{V^{(1/2)}}(\cK_c^-) =\left(
%\begin{array}{cccc} 
% 1    & 0 & 0 & 0 \\
%0  &  1 & 0 & 0 \\
%0  &  -(q-q^{-1}) & 1 &  0 \\
%0 & 0 & 0 & 1
%\end{array} \right) \ ,\label{Kcmat}
%\end{align}
%
\begin{align}
\rho_{V^{(1/2)}}(\cK_e^+) =\left(
\begin{array}{cccc} 
 q    & 0 & 0 & 0 \\
0  &  q^{-1} & q^{-1}(q-q^{-1}) & 0 \\
1  &  0 & q^{-1} &  0 \\
0 & 1 & q-q^{-1} & q
\end{array} \right) \ ,\qquad \rho_{V^{(1/2)}}(\cK_e^-) =\left(
\begin{array}{cccc} 
 1    & 0 & 0 & 0 \\
0  &  1 & 0 & 0 \\
q^{-1}  &  -(q-q^{-1}) & 1 &  0 \\
0 & q & 0 & 1
\end{array} \right) \ \label{Kemat}
\end{align}
and
\begin{align}
\rho_{V^{(1/2)}}(\cK_{\cal B}) =\left(
\begin{array}{cccc} 
 q^{-1}    & 0 & q^{-2}-1 & q^{-2}-1 \\
q-q^{-1}  &  q & 1-q^{-2} & 1-q^{-2} \\
1  &  0 & q^{-1} &  q^{-1}-q \\
0 & 1 & 0 & q
\end{array} \right) \ ,\qquad \rho_{V^{(1/2)}}(\cK_{X}) =\left(
\begin{array}{cccc} 
 1    & 0 & 0 & 0 \\
0  &  1 & 0 & 0 \\
q  &  0 & 1 &  0\\
0 & q^{-1} & 0 & 1
\end{array} \right) \ .\label{Kborelmat}
\end{align}
As recalled in previous subsection,  the constant R-matrices (\ref{RfromL}) can be derived from the universal R-matrix of $U_q(sl_2)$, see e.g. \cite[Example 6.4.12]{CPb}. For the equitable case, it is expected that the constant K-matrices above can be derived from a universal K-matrix of equitable type.

\section{Spectral parameter dependent K-operators for $U_q(sl_2)$}\label{sec4}
In this section, we study a class of K-operators that solve the spectral parameter dependent Freidel-Maillet equation (\ref{RE}) recently revisited\footnote{For the class of solutions  associated with `twisted' Lax operators, (\ref{RE}) can be mapped to a braided Yang-Baxter equation. See e.g. \cite{Ku95}.} in \cite{Bas20}. We start from a K-operator (\ref{Kg}) with entries in   $   \cal A \otimes {\mathbb K}[u,u^{-1}]$, where $\cal A$ is a quotient of the alternating subalgebra of $U_q(\widehat{sl_2})$ introduced in \cite{Ter18}. See \cite[Section 4]{Bas20} for details. Certain specializations $\cal A \rightarrow U_q(sl_2)$ are considered. It is shown that corresponding images of the K-operator admit a decomposition in terms of $\cK^\pm$. 
 %Besides  Chevalley and equitable presentations of $U_q(sl_2)$ given in Definition \ref{def:uq} and Theorem \ref{thm1}, 
%A presentation of Freidel-Maillet type including a spectral parameter is obtained for $U_q(sl_2)$, encoding the defining relations (\ref{FM1})-(\ref{FMpm}). 
Examples of spectral parameter dependent K-matrices are given. \vspace{1mm} 

As a preliminary, let us recall that the FRT presentation associated with $U'_q(sl_2)$ can be encoded into a single spectral parameter dependent Yang-Baxter equation \cite{F1}.
Introduce the quantum R-matrix  \cite{Baxter}
\begin{align}
R(u) =\left(
\begin{array}{cccc} 
 uq -  u^{-1}q^{-1}    & 0 & 0 & 0 \\
0  &  u -  u^{-1} & q-q^{-1} & 0 \\
0  &  q-q^{-1} & u -  u^{-1} &  0 \\
0 & 0 & 0 & uq -  u^{-1}q^{-1}
\end{array} \right) \ .\label{R}
\end{align}
It is known that $R(u)$ satisfies
%Using the standard notation 
%
%\beqa
%R_{ij}(u)\in \mathrm{End}({\cal V}_i\otimes {\cal V}_j),\label{notR}
%\eeqa
%
the (spectral parameter dependent) Yang-Baxter equation  in the space ${\cal V}_1\otimes {\cal V}_2\otimes {\cal V}_3$: 
\begin{align}
R_{12}(u/v)R_{13}(u)R_{23}(v)=R_{23}(v)R_{13}(u)R_{12}(u/v)\ \qquad \forall u,v\ .\label{YB}
\end{align}
Note that the permutation operator  $P=R(1)/(q-q^{-1})$, and  $R_{12}(u)=PR_{12}(u)P=R_{21}(u)$. The FRT presentation
of $U'_q(sl_2)$ is encoded as
\beqa
K^{1/2}K^{-1/2}&=& K^{-1/2}K^{1/2}=1 \ ,\label{YBA00}\\
 R(u/v)\ (\cL(u)\otimes {\mathbb I})\ ( {\mathbb I} \otimes \cL(v))  &=&  ( {\mathbb I} \otimes \cL(v))\  (\cL(u)\otimes {\mathbb I})   \ R(u/v)  \ .\label{YBA}
\eeqa
Up to  automorphisms of $U'_q(sl_2)$, the so-called quantum  Lax operator  $(\cL(u))_{i,j}\in U'_q(sl_2)\otimes {\mathbb K}[u,u^{-1}]$ reads:
\beqa
&&  \cL(u)=
       \begin{pmatrix}
      u K^{1/2} -  u^{-1} K^{-1/2} & (q-q^{-1}) FK^{1/2}\\
  (q-q^{-1}) K^{-1/2}E &   u K^{-1/2} -  u^{-1} K^{1/2}
      \end{pmatrix} \ .\label{Lop}
\eeqa

In the form (\ref{YBA}), a generating function for central elements is given by the quantum determinant  ${\rm det}_q(\cL(u))$. For instance,   for $U'_q(sl_2)$ the Casimir element $\Omega_c$ given by (\ref{Centc}) follows from ${\rm det}_q(\cL(u))={\rm tr}_{12}\big(P^{-}_{12} \cL_1(u)\cL_2(uq))$ with $P^-_{12}=(1-P)/2$:
%. Namely, using (\ref{Lop}) one finds:
%
\beqa
{\rm det}_q(\cL(u)) = qu^2 + q^{-1}u^2 - (q-q^{-1})^2\Omega_c\ .
\eeqa

\vspace{1mm}

Our purpose is now to investigate the simplest spectral parameter dependent K-operator solutions of the Freidel-Maillet type equation (\ref{RE})
in light of the results in previous section.
%show that the Chevalley and equitable presentations of $U_q(sl_2)$ given in Definition \ref{def:uq} and Theorem \ref{thm1}, respectively, can be similarly encoded in a single Freidel-Maillet type equation
% admit a presentation in terms of a spectral parameter dependent quadratic algebra of Freidel-Maillet type. 
To this end, we introduce a more general setting.\vspace{1mm}

%, viewed as a limiting case of the  standard quantum  reflection equation introduced in the context of boundary quantum inverse scattering theory \cite{Cher,Skly88}. 
Consider the $R-$matrices given by (\ref{R}) and (\ref{R0}). Note that $R^{(0)}$ can be viewed as a limiting case of (\ref{R}).
Introduce the associative algebra $\cal A$ with generators $\{\cW_0,\cW_1,\cZ_1,\tilde\cZ_1\}$ such that the following relations hold:
\beqa
\big[\cW_0,\cW_1\big]&=&k_+\tilde\cZ_1 - k_- \cZ_1\ ,\label{wg1}\\
\big[\cW_0,\cZ_1\big]_q&=& -k_+\bar\epsilon_+\ ,\qquad \big[\tilde\cZ_1,\cW_0\big]_q = -k_-\bar\epsilon_+\ ,\label{wg2}\\
\big[\cW_1,\tilde\cZ_1\big]_q&=& -k_-\bar\epsilon_-\ ,\qquad \big[\cZ_1,\cW_1\big]_q = -k_+\bar\epsilon_-\ ,\label{wg3} \\
\big[\cZ_1 , \tilde\cZ_1 \big] &=& (q-q^{-1})\big( \bar\epsilon_+ \cW_1 - \bar\epsilon_- \cW_0\big) \ ,\label{adwg}
\eeqa
where $k_\pm,\bar\epsilon_\pm$ are scalars in ${\mathbb K}(q)$. Note that the last relation (\ref{adwg}) can be derived from (\ref{wg1})-(\ref{wg3}), or alternatively the first relation (\ref{wg1}) can be derived from   (\ref{wg2})-(\ref{adwg}). Thus, in each case the defining relations for $\cal A$ correspond to a subset of (\ref{wg1})-(\ref{adwg}).
\begin{lem} 
\label{lem:Kg} The K-operator
\beqa
 \qquad\qquad \tilde \cK_g(u) &=&  \begin{pmatrix}  uq \cW_0 - u^{-1} \bar\epsilon_+ &  \cZ_1 + \frac{k_+ qu^2}{(q-q^{-1})}  \\ \tilde\cZ_1 +  \frac{k_- qu^2}{(q-q^{-1})}    
     &   uq \cW_1 - u^{-1}\bar\epsilon_-
      \end{pmatrix}\  \label{Kg}
\eeqa 
satisfies the spectral parameter dependent Freidel-Maillet equation
\begin{align} R(u/v)\ (\cK(u)\otimes {\mathbb I})\ R^{(0)}\ ({\mathbb I} \otimes \cK(v))\
= \ ({\mathbb I} \otimes \cK(v))\  R^{(0)}\ (\cK(u)\otimes {\mathbb I})\ R(u/v)\ 
\label{RE}\ .
 \end{align}
\end{lem}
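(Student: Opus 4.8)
The plan is to verify (\ref{RE}) by direct substitution of the explicit data (\ref{R}), (\ref{R0}) and (\ref{Kg}), and to show that the resulting $4\times4$ matrix identity holds in $\End({\mathbb C}^2\otimes{\mathbb C}^2)\otimes{\cal A}\otimes{\mathbb K}[u^{\pm1},v^{\pm1}]$ once the defining relations (\ref{wg1})--(\ref{adwg}) are imposed. Writing $\cK_1(u)=\tilde\cK_g(u)\otimes{\mathbb I}$ and $\cK_2(v)={\mathbb I}\otimes\tilde\cK_g(v)$, I would form the two sides $R(u/v)\cK_1(u)R^{(0)}\cK_2(v)$ and $\cK_2(v)R^{(0)}\cK_1(u)R(u/v)$ explicitly. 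Since $R(u/v)$ and $R^{(0)}$ are $c$-number matrices, all noncommutativity is carried by the entries of $\tilde\cK_g$, so throughout the computation the space-$1$ factors must be kept to the left of the space-$2$ factors, and the only input from ${\cal A}$ is the set (\ref{wg1})--(\ref{adwg}); this is the sole direction we need, not an equivalence.

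The computation is made tractable by the block form of $R(u/v)$: it is diagonal on the first and last basis vectors of ${\mathbb C}^2\otimes{\mathbb C}^2$ and reduces to a $2\times2$ Hecke block on the middle pair, so left (resp.\ right) multiplication only mixes the two middle rows (resp.\ columns) and rescales the outer ones. Denoting by $A,B,C,D$ the four entries of $\tilde\cK_g$ read off from (\ref{Kg}) (so $A,D$ carry the powers $u^{\pm1}$ while $B,C$ carry $u^{0},u^{2}$), the four corner entries of (\ref{RE}) collapse to $\big[A(u),A(v)\big]=\big[B(u),B(v)\big]=\big[C(u),C(v)\big]=\big[D(u),D(v)\big]=0$, each automatic because the two factors are polynomials in a single generator. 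For the remaining twelve entries I would expand both sides into monomials $u^{a}v^{b}$ and match coefficients, each of which is an element of ${\cal A}$. As a representative check, the $(1,2)$ entry at order $u^{2}v^{-1}$ reads $q^{2}\cW_0\cZ_1=\cZ_1\cW_0-k_+q\bar\epsilon_+$, where the left-hand coefficient comes from $R_{11}(u/v)\,A(u)B(v)$ and the decisive scalar $-k_+q\bar\epsilon_+$ on the right is produced by the $(q-q^{-1})$ Hecke term acting on $A(v)B(u)$; this is exactly $\big[\cW_0,\cZ_1\big]_q=-k_+\bar\epsilon_+$, i.e.\ the first relation in (\ref{wg2}). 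Running through the other inner entries and orders reproduces the remaining $q$-commutators (\ref{wg3}), the bracket (\ref{wg1}) and the relation (\ref{adwg}). A useful economy is the ${\mathbb Z}_2$ symmetry exchanging $(\cW_0,\cZ_1,\bar\epsilon_+,k_+)\leftrightarrow(\cW_1,\tilde\cZ_1,\bar\epsilon_-,k_-)$, under which (\ref{wg2})$\leftrightarrow$(\ref{wg3}) while (\ref{wg1}) and (\ref{adwg}) are invariant; matched with the corresponding symmetry of $R$, $R^{(0)}$ and $\tilde\cK_g$, it halves the entries requiring explicit treatment.

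The main obstacle is organizational rather than conceptual: there are twelve nontrivial entries, each an identity between Laurent polynomials with several monomials, and the decisive contributions are frequently the scalar $\bar\epsilon_\pm$- and $k_\pm$-terms. The $u^{2}v^{-1}$ coefficient above illustrates the danger: dropping the Hecke contribution from $A(v)B(u)$ would spuriously force $q^{2}\cW_0\cZ_1=\cZ_1\cW_0$, which is false in ${\cal A}$. The discipline required is therefore to retain every cross term and to preserve operator ordering; once all coefficients are collected, each reduces to one of (\ref{wg1})--(\ref{adwg}) with no further constraint, confirming that the listed relations of ${\cal A}$ are precisely sufficient for (\ref{RE}). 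This reproduces the verification carried out in \cite{Bas20}, to which the computation may alternatively be referred.
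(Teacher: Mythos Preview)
Your approach is correct and coincides with the paper's own proof, which consists of the single sentence ``Insert (\ref{Kg}) in (\ref{RE}). One extracts the relations independent of $u$, which are given by (\ref{wg1})--(\ref{adwg}).'' You have simply spelled out in detail what this substitution and coefficient extraction looks like, including the helpful observations about the block structure of $R(u/v)$, the automatic vanishing of the corner entries, and the $\mathbb{Z}_2$ symmetry that halves the work; none of this departs from the paper's method.
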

%
%
%\begin{lem}\label{lem:Kg} The K-matrix 
%
%\beqa
% \qquad\qquad \tilde K_g(u) &=&  \begin{pmatrix}  uq \cW_0 - u^{-1} \bar\epsilon_+ &  \frac{(\cG_1-(q-q^{-1})\bar\epsilon_+ \bar\epsilon_-)}{k_-(q+q^{-1})} + \frac{k_+ qu^2}{(q-q^{-1})}- \frac{k_+ (q-q^{-1})\Omega}{(q+q^{-1})}  \\ \frac{(\tilde\cG_1-(q-q^{-1})\bar\epsilon_+ \bar\epsilon_-)}{k_+(q+q^{-1})} + \frac{k_- qu^2}{(q-q^{-1})}  - \frac{k_-(q-q^{-1})\Omega}{(q+q^{-1})}  
%     &   uq \cW_1 - u^{-1}\bar\epsilon_-
%      \end{pmatrix}\  \label{Kg}
%\eeqa 
%
%satisfies the Freidel-Maillet equation (\ref{RE}) provided
%
%\beqa
%\big[\cW_0,\cW_1\big]&=&\frac{\tilde\cG_1 - \cG_1}{q+q^{-1}}\ ,\label{wg1}\\
%\big[\cW_0,\cG_1\big]_q&=& \big[\tilde\cG_1,\cW_0\big]_q =  a\Omega\cW_0 + b \cW_0 + c\bar\epsilon_+ \ ,\\ 
%\big[\cG_1,\cW_1\big]_q&=& \big[\cW_1,\tilde\cG_1\big]_q =a\Omega\cW_1 + b \cW_1 + c\bar\epsilon_-\ ,\label{wg3}
%\eeqa
%
%with scalars
%
%\beqa
%a = (q-q^{-1})^2k_+k_- \ ,\qquad b=(q-q^{-1})^2\bar\epsilon_+\bar\epsilon_- \ ,\qquad c= - (q+q^{-1})k_+k_-\ .
%\eeqa
%
%\end{lem}
%
\begin{proof} Insert (\ref{Kg}) in (\ref{RE}). One extracts the relations independent of $u$, which are given by (\ref{wg1})-(\ref{adwg}). 
\end{proof}
%
%\begin{proof} Insert (\ref{Kg}) in (\ref{RE}). One extracts the relations independent of $u$, which are given by (\ref{wg1})-(\ref{wg3}) and the additional relation
%
%\beqa
%\big[\cG_1 , \tilde \cG_1 \big] = k_+k_-(q+q^{-1})(q^2-q^{-2})\big[ \bar\epsilon_+ \cW_1 - \bar\epsilon_- \cW_0\big] \ .\label{adwg}
%\eeqa
%
%However, it is found that (\ref{wg1})-(\ref{wg3}) imply  (\ref{adwg}). 
%\end{proof}
%
Note that (\ref{RE}) for the symmetric R-matrix (\ref{R}) can be viewed as a limiting case of a reflection equation \cite{Cher,Skly88}. 
For a Freidel-Maillet equation of the form (\ref{RE}), a generating function for central elements  is given by the so-called Sklyanin quantum determinant $\Gamma(u)$. By analogy with \cite[Proposition 5]{Skly88},  one finds 
\beqa
\Gamma(u)={\rm tr}_{12}\big(P^{-}_{12}(\cK(u)\otimes {\mathbb I})\ R^{(0)} ({\mathbb I} \otimes \cK(uq))\big) \ \ \label{gamma}
\eeqa
is  such that  $\big[\Gamma(u),(\cK(v))_{i,j}\big]=0$. A detailed proof can be found in \cite{Bas20}. For the K-operator $\tilde \cK_g(u)$, computing the r.h.s. of (\ref{gamma}) one gets:
\beqa
\Gamma(u) = u^2q^4 \Gamma_0  - (q-q^{-1})^2 q^2\Gamma_1 - k_+k_-u^4q^6 +  (q-q^{-1})^2\bar\epsilon_+\bar\epsilon_-u^{-2}\ ,
%\Gamma_c(u) = q^{-1}(q-q^{-1})^{-2}\left(  (q-q^{-1})^{-2}\Omega_c - qu^2 - q^{-1}u^{-2}\right)\ .
%\Gamma_e(u) = -q^{-1}\left(  \Omega_e - qu^2 - q^{-1}u^{-2}\right)\ .
\eeqa  
where
\beqa
\Gamma_0&=& \frac{1}{2}\left(  (q-q^{-1})^2(\cW_0\cW_1 + \cW_1\cW_0)  - (q^2-q^{-2}) (k_+\tilde\cZ_1 + k_-\cZ_1) \right)\ ,\label{gam0}\\
\Gamma_1&=& \frac{1}{2}\left(  \cZ_1\tilde\cZ_1 + \tilde\cZ_1\cZ_1  + (q+q^{-1}) (\bar\epsilon_+\cW_1 + \bar\epsilon_-\cW_0) \right)\ ,\label{gam1}
\eeqa
are central in $\cal A$.\vspace{1mm}

Three different specializations $\varphi_c,\varphi_e,\varphi'_c: \cal A \rightarrow U_q(sl_2)$ are now considered. In the table below, the corresponding images of $\cW_0,\cW_1,\cZ_1,\tilde\cZ_1$, structure constants $\bar\epsilon_\pm,k_\pm$ and central elements $\Gamma_0,\Gamma_1$ are displayed.
%
%\begin{lem}
%\label{lemc} 
%$\varphi_c(\cal A) \cong \varphi'_c(\cal A)\cong U_q(sl_2)$. 
%\end{lem}
%
%\begin{proof} Consider the map $\varphi_c$. From (\ref{wg1}), one gets $[\varphi_c(\cW_0),\varphi_c(\cW_1)]=[K,K^{-1}]=0$. By (\ref{eq1}), $KK^{-1}=\varphi_c(\cW_0)\varphi_c(\cW_1)=1$, then (\ref{wg2})-(\ref{adwg}) reduce to (\ref{eq3})-(\ref{eq4}). Consider the map $\varphi'_c$. From (\ref{wg1})-(\ref{wg3}) one gets (\ref{eq3}), (\ref{eq4}). From (\ref{adwg}) one gets $[\varphi'_c(\cZ_1),\varphi'_c(\tilde\cZ_1)]\sim [K,K^{-1}]=0$. 
%\end{proof}
%
%In each case, the quantum algebra $U_q(sl_2)$ is interpreted as a quotient of $\cal A$. Thus, the Chevalley presentation of $U_q(sl_2)$ is isomorphic to two different quotients of $\cal A$: besides (\ref{wg1})-(\ref{adwg}), one has the extra relation $\varphi_c(\cW_0)\varphi_c(\cW_1)=1$  for $\varphi_c$ or the extra relation  $\varphi'_c(\cZ_1)\varphi'_c(\tilde\cZ_1)=1$ for $\varphi'_c$. \vspace{1mm}
%
%\begin{lem} \label{leme}
%  $\varphi_e(\cal A) \cong  U_q(sl_2)$. 
%\end{lem}
%
%\begin{proof} From (\ref{wg1}), one gets the first relation in (\ref{eq:e4}), whereas the second (resp. third) relation follows from the first (resp. second) eq. in (\ref{wg3}) (resp. (\ref{wg2})). By (\ref{eq:e1}) $XX^{-1}=1$, so the first relation in (\ref{wg2}) reduces to $[X,Y]_q=q-q^{-1}$. It implies $[\varphi_e(\cZ_1),\varphi_e(\cW_1)]_q=0$. Finally, using (\ref{eq:e4}) one shows  (\ref{adwg}) holds.
%\end{proof}
%
\begin{table}[h]
\begin{tabular}{|c|c|c|c|}	
\hline 
${\cal A}$
 &  $\varphi_c({\cal A})$ & $\varphi_e({\cal A})$ & $\varphi'_c({\cal A})$  \\ \hline \hline
$\cW_0$
 &  $K$ & $X$ & $E$   \\ \hline
$\cW_1$
 & $K^{-1}$  & $Y$ &  $F$ \\ \hline
$\cZ_1$
 &  $(q-q^{-1})%q^{-1/2}
FK$ & $%q^{-1/2}
q^{-1}(YX-1)$ & $- %q^{1/2}
(q-q^{-1})^{-1}K$   \\ \hline
$\tilde\cZ_1$
 &  $(q-q^{-1})%q^{1/2}
K^{-1}E$ & $-%q^{-1/2}
 Z$ & $- %q^{-1/2}
(q-q^{-1})^{-1}K^{-1}$   \\ \hline
$\bar\epsilon_\pm$
 & $1$  & $1$ &  $0$ \\ \hline
$(k_+,k_-)$
 &  $(0,0)$ & $(0,%q^{-1/2}
q-q^{-1})$ &  $(1,1%q^{1/2},%q^{-1/2}
)$  \\ \hline
$\Gamma_0$
 & $(q-q^{-1})^2$  &  $(q-q^{-1})^2$ & $\Omega_c(q-q^{-1})^2$   \\ \hline
$\Gamma_1$
 & $\Omega_c(q-q^{-1})^2$  & $\Omega_e$ &  $(q-q^{-1})^{-2}$  \\ \hline
\end{tabular} \\
\vspace{2mm}
\caption{Three specializations ${\cal A} \rightarrow U_q(sl_2)$}
\label{table:spec}
\end{table}
Note that other specializations can be obtained applying $r: \ X \rightarrow Y$, $Y \rightarrow Z$,  $Z \rightarrow X$ to $\varphi_e(\cal A)$.\vspace{1mm}

The K-operator  (\ref{Kg}) is now specialized.  
From Lemma  \ref{lem:Kg}, Table \ref{table:spec}, %\ref{lemc}, \ref{leme} 
it follows:
%
%\begin{prop}\label{propc}
%\rm
% $U_q(sl_2)$ admits a  spectral parameter dependent presentation of Freidel-Maillet type. 
%For the Chevalley presentation, the defining relations are given by (\ref{RE}) and $KK^{-1}=K^{-1}K=1$ with (\ref{R}) and $\cK_c(u)$ is such that 
%
\beqa
 \cK_c(u)=
       \begin{pmatrix}
      uqK-u^{-1}  &  (q-q^{-1})FK \\
     (q-q^{-1})K^{-1}E & uqK^{-1} -u^{-1} 
      \end{pmatrix} \label{Kc} \ 
\eeqa
%
%\beqa
%&&  K_c(u)=
%       \begin{pmatrix}
%      E  &  \frac{1}{q-q^{-1}}(u-u^{-1}K) \\
%     \frac{q^{-1}}{q-q^{-1}}(u-u^{-1}K^{-1})  & F 
%      \end{pmatrix} \label{Kc} \ ,
%\eeqa
%
%For the equitable presentation, the defining relations are given by (\ref{RE}) and $XX^{-1}=X^{-1}X=1$ with (\ref{R}) and $\cK_e(u)$ is such that 
%
or
\beqa
\cK_e(u)=
       \begin{pmatrix}
      uqX - u^{-1}  &  q^{-1}(YX - 1)\\
     qu^2 - Z  & uqY - u^{-1} 
      \end{pmatrix} \ \label{Ke}
\eeqa
%
%\end{prop}
%
%\begin{proof} (\ref{eq1}) holds by definition. Note that (\ref{RE}) is left invariant by the shift $u \rightarrow uq^{-1/2}$. Then, use the specialization of Lemma \ref{lem:Kg} by $\varphi_c$. 
%Insert $(\ref{Kc})$  in (\ref{RE}) and extract the relations independent of $u$. One finds that they are equivalent to the relations (\ref{eq3}), (\ref{eq4}). 
%\end{proof}
%
%\begin{proof} (\ref{eq:e1}) holds by definition. Use the specialization of Lemma \ref{lem:Kg} by $\varphi_e$.
 %Insert $(\ref{Ke})$  in (\ref{RE}) and extract the relations independent of $u$. One finds that they are equivalent to the relations (\ref{eq:e4}). 
%\end{proof}
%
satisfy (\ref{RE}). Note that contrary to (\ref{Lop}), in both cases (\ref{Kc}), (\ref{Ke}), the entries 
\beqa
(\cK(u))_{i,j}\in U_q(sl_2)\otimes {\mathbb K}[u,u^{-1}]\ .
\eeqa
%
 %\vspace{1mm}

Other examples of equitable K-operators with spectral parameter can be obtained applying $r$ on (\ref{Ke}). For instance, denote $\cK_{{\cal B}X}(u) = r(K_e(u))$. By cyclicity of (\ref{eq:e4}) under the action of $r$, then
\beqa
\cK_{{\cal B}X}(u)=
       \begin{pmatrix}
      uqY - u^{-1}  &  q^{-1}(ZY - 1)\\
     qu^2 - X  & uqZ - u^{-1} 
      \end{pmatrix} \ \label{KBXu}
\eeqa
solves (\ref{RE}).\vspace{1mm}

The isomorphism (\ref{iso1}) can be derived using the connection between the Freidel-Maillet algebra (\ref{RE})  and the Yang-Baxter algebra (\ref{YBA}).  Adaptating \cite[Proposition 2]{Skly88}, \cite{FM91}, to the Freidel-Maillet type equation (\ref{RE}), let $\cK_0(u)$  be a solution of  (\ref{RE}). Assume there exists a pair of quantum Lax operators $\cL(u),\cL_0(u)$ satisfying  the exchange relations:
\beqa R(u/v)\ (\cL(u)\otimes {\mathbb I})\ ( {\mathbb I} \otimes \cL(v))  &=&  ( {\mathbb I} \otimes \cL(v))\  (\cL(u)\otimes {\mathbb I})   \ R(u/v)  \ ,  \label{YBA1}\\
 R(u/v)\ (\cL_0(u)\otimes {\mathbb I})\ ( {\mathbb I} \otimes \cL_0(v))  &=&  ( {\mathbb I} \otimes \cL_0(v))\  (\cL_0(u)\otimes {\mathbb I})   \ R(u/v)  \label{YBA2} \ ,\\
 R^{(0)}\ (\cL_0(u)\otimes {\mathbb I})\ ( {\mathbb I} \otimes \cL(v))  &=&  ( {\mathbb I} \otimes \cL(v))\  (\cL_0(u)\otimes {\mathbb I})   \ R^{(0)}  \label{YBA3} \qquad \forall u,v\  .
 \eeqa
Then, it is easy to show that the `dressed' K-operator defined by
\beqa
\tilde \cK(u) = \cL_0(u) \cK_0(u) \cL(u) \label{KL}
\eeqa
is also a solution of  (\ref{RE}). For instance, in addition to (\ref{Lop}) define  
\beqa
\cL_0(u)= diag(uK^{1/2},uK^{-1/2}) \label{L0}
\eeqa
and 
\beqa
 \cK_0^{e \rightarrow c}(u) =  \begin{pmatrix} u^{-1} & 0  \\
    q^{-1}  &  u^{-1} 
      \end{pmatrix}\ . \label{K0}
\eeqa
It is checked that $\cK_0^{e \rightarrow c}(u)$  satisfies (\ref{RE}). Also, by definition (\ref{YBA1})  holds. Then, the relations (\ref{YBA2})-(\ref{YBA3}) follow as limiting cases of  (\ref{YBA1}). By previous comments, 
\beqa
\cK_e(u) \rightarrow \cL_0(u) \cK_0^{e \rightarrow c}(u) \cL(u) \ \label{isoK}
\eeqa
relates (\ref{RE}) to (\ref{YBA1})-(\ref{YBA3}). Computing explicitely the r.h.s of (\ref{isoK}), one finds that the resulting expression coincides with the image of (\ref{Ke}) by the isomorphism (\ref{iso1}).

\subsection{Decomposition of K-operators}
The spectral parameter dependent equations (\ref{YBA00}), (\ref{YBA}), encode the defining relations of the FRT presentation of $U'_q(sl_2)$,  see \cite[p. 44]{F1} for details. Introduce the standard non-symmetric R-matrix $R^{ns}(u)$:
\beqa
 R^{ns}(u)= uR - u^{-1}R_{21}^{-1}\ . 
\eeqa
It is related with the symmetric R-matrix (\ref{R}) through the similarity transformation
\beqa
 R_{12}(u/v) &=& \cal M(u)_1  \cal M(v)_2 R^{ns}_{12}(u/v) \cal M(v)_2^{-1} \cal M(u)_1^{-1}\ ,\quad \mbox{with} \quad 
 {\cal M}(u)=   \begin{pmatrix}
         u^{1/2} & 0 \\
    0   &  u^{-1/2} 
                   \end{pmatrix} \ .\label{simil}
\eeqa
Defining $\cL^{ns}(u)=\cal M(u) \cL(u) \cal M(u)^{-1}$ with (\ref{Lop}), one has the decomposition $\cL^{ns}(u) = uq^{1/2} \cL^+ - u^{-1/2}q^{-1/2} \cL^-$. In terms of the new L-operators, the Yang-Baxter relation reads $R^{ns}(u/v)\cL^{ns}_1(u) \cL^{ns}_2(v)= \cL^{ns}_2(v)\cL^{ns}_1(u) R^{ns}(u/v)$. 
Extracting the independent relations in $R,R_{21}^{-1},\cL_1^\pm,\cL_2^\pm$ from the latter, one recovers the FRT presentation of $U'_q(sl_2)$ obtained from (\ref{frt1})-(\ref{frt3}).  \vspace{1mm}

A similar statement holds for the Freidel-Maillet equation (\ref{RE}) with specialized K-operators (\ref{Kc}) or (\ref{Ke}). %To show that it encodes the defining relations of Theorem \ref{thm:isoUslFM}, 
To show that, we apply the same technique\footnote{In the context of non-ultralocal integrable models and braided Yang-Baxter algebras, this technique has been introduced in \cite{HK96}.}. For $\cK\in \{\cK_c,\cK_e\}$ with (\ref{Kc}), (\ref{Ke}),  define the new K-operators:
\beqa
\cK^{ns}(u)=\cal M(u) \cK(u) \cal M(u)^{-1}\ .
\eeqa
It is readily checked that they admit the following simple decomposition:
\beqa
\cK^{ns}(u) = uq\cK^+ - u^{-1} \cK^-\ .  %\mbox{with}\qquad K\in \{K_c,K_m \}
\eeqa
By (\ref{simil}), they satisfy the Freidel-Maillet equation for the non-symmetric R-matrix:
\begin{align} R^{ns}(u/v)\cK^{ns}_1(u) R^{(0)}\cK^{ns}_2(v)\
= \ \cK^{ns}_2(v) R^{(0)} \cK^{ns}_1(u) R^{ns}(u/v)\ 
\label{REns}\ .
 \end{align}
Expanding this equation in $u,v$, one extracts seven equations. Six coincide with (\ref{FMpp}), (\ref{FMpm}),  (\ref{RmKKpp}), (\ref{RmKmKp}), but  recall that (\ref{RmKKpp}), (\ref{RmKmKp}),  follow from the three equations (\ref{FMpp}), (\ref{FMpm}).
The remaining equation to show reads
\beqa
R\cK^-_1R^{(0)}\cK^+_2 - \cK^+_2R^{(0)}\cK^-_1R = R_{21}^{-1}\cK^+_1R^{(0)}\cK^-_2 - \cK^-_2R^{(0)}\cK^+_1R_{21}^{-1}\ .
\eeqa
Using $R-R_{21}^{-1}=(q-q^{-1})P$ and the Freidel-Maillet equations (\ref{FMpm}), (\ref{RmKmKp}), by simple computation one finds that this equation is satisfied. Thus, the independent relations are given by (\ref{FM1})-(\ref{FMpm}). We conclude that the Freidel-Maillet type equation (\ref{RE}) encodes the relations (\ref{FMpp}), (\ref{FMpm}) of Theorem \ref{thm:isoUslFM}. 

\subsection{K-matrices of $U_q(sl_2)$ with spectral parameter}
In the context of quantum integrable systems, R and K-matrices are the basic ingredient for the construction of mutually commuting quantities. For the class of quantum integrable systems generated from the Freidel-Maillet algebra of Theorem \ref{thm:isoUslFM}, in general for irreducible finite dimensional representations of $U_q(sl_2)$ the corresponding spectral parameter dependent Freidel-Maillet equation takes the form:
\begin{align}
R_{12}(u/v)\cK_{13}(u)R^{(0)}_{12}\cK_{23}(v)=\cK_{23}(v)R^{(0)}_{12}\cK_{13}(u)R_{12}(u/v)\ \in \ \mathrm{End}({\cal V}_1\otimes {\cal V}_2 \otimes {\cal V}_3).\label{FMscalu}
\end{align}
According to the choice of the Chevalley or equitable presentation of $U_q(sl_2)$, the simplest examples of K-matrices of $U_q(sl_2)$ with a spectral parameter  take a rather different form. For the one-dimensional (spin $s=0$) representation, the image of (\ref{Ke}) or (\ref{KBXu}) produces 
a non-trivial solution of (\ref{FMscalu}) given by:
\beqa
\rho_{V^{(0)}}(\cK_e(u)) = \rho_{V^{(0)}}(\cK_{{\cal B}X}(u))= \begin{pmatrix} uq-u^{-1} & 0  \\
   u^2q -1 & uq-u^{-1} 
      \end{pmatrix} \ .
\eeqa

For the spin $s=1/2$ representation, from  (\ref{Kc}), (\ref{Ke}) using (\ref{repec}), (\ref{repeq}), one gets: 
\begin{align}
&&\rho_{V^{(1/2)}}(\cK_c(u)) =\left(
\begin{array}{cccc} 
 uq^2-u^{-1}    & 0 & 0 & 0 \\
0  &  u-u^{-1} & q(q-q^{-1}) & 0 \\
0  &  q^{-1}(q-q^{-1}) & u-u^{-1} &  0 \\
0 & 0 & 0 & uq^2-u^{-1}
\end{array} \right) \ ,\label{Kcmatu}\\
&& \rho_{V^{(1/2)}}(\cK_e(u)) =\left(
\begin{array}{cccc} 
 uq^2-u^{-1}    & 0 & 0 & 0 \\
0  &  u-u^{-1} & q-q^{-1} & 0 \\
u^2q-q^{-1}  &  q-q^{-1} & u-u^{-1} &  0 \\
0 & q(u^2-1) & uq(q-q^{-1}) & uq^2-u^{-1}
\end{array} \right) \ .\label{Kematu}
\end{align}
From (\ref{KBXu}), one gets:
\begin{align}
&&\rho_{V^{(1/2)}}(\cK_{{\cal B}X}(u)) =\left(
\begin{array}{cccc} 
 u-u^{-1}    & 0 & q^{-1} -q & q^{-1} -q \\
u(q^2-1)  &  uq^2-u^{-1} & q-q^{-1} & q-q^{-1} \\
q(u^2-1)  &  0 & u-u^{-1} &  u(1-q^2) \\
0 & qu^2-q^{-1} & 0 & uq^2-u^{-1}
\end{array} \right) \ .\label{KBXmatu}
\end{align}
%

%Computing explicitely the r.h.s of (\ref{KL}) using (\ref{KL}) leads to the identification:
%
%\beqa
%{\cW}_{0}&=& \ k_+q^{-1/2}E + \bar\epsilon_+K\ , \label{case1}\\
%{\cW}_{1}&=&    \ k_-q^{1/2}F + \bar\epsilon_- K^{-1}
%\ ,\\
%\frac{\left(\cG_{1} - (q-q^{-1})\bar\epsilon_+\bar\epsilon_-\right)}{k_-(q+q^{-1})} &=&  
% k_+\frac{(q-q^{-1})}{(q+q^{-1})}\Omega  - \frac{k_+}{(q-q^{-1})}K  + (q-q^{-1}) \bar\epsilon_+ q^{-1/2} FK \ ,
%\\
%\frac{\left(\tilde\cG_{1} - (q-q^{-1})\bar\epsilon_+\bar\epsilon_-\right)}{k_+(q+q^{-1})}&=&  
% k_-\frac{(q-q^{-1})}{(q+q^{-1})}\Omega  - \frac{k_-}{(q-q^{-1})}K^{-1}  + (q-q^{-1}) \bar\epsilon_- q^{1/2} K^{-1}E \ . \label{case4}
%\eeqa
%
%Note that replacing the above expressions in the entries of (\ref{Kg}), the resulting K-matrix is well-defined for $k_+=0$ (or $k_-=0$). 
%We now consider special sets of parameters. \vspace{1mm}

%$\bullet$ For the choice $\bar\epsilon_\pm=0$ and $k_\pm = q^{\pm 1/2}$, up to an irrelevant overall factor one recovers $K_c(u)$ from $\tilde K_g(uq^{-1/2})$; \vspace{1mm}

%$\bullet$ For the choice $\bar\epsilon_\pm=1$, $k_+ =0$ and $k_-=q^{-1/2}(q-q^{-1})$, up to an irrelevant overall factor one recovers $K_e(u)$ from $\tilde K_g(uq^{-1/2})$ using the identification (\ref{iso1}). \vspace{1mm}

\vspace{1cm}

\noindent{\bf Acknowledgments:} I thank  Nicolas Cramp\'e for gratefully sharing a MAPLE code and comments, and Azat Gainutdinov for discussions. I thank Paul Terwilliger for comments on the manuscript and discussions, Hadewijch De Clercq and Bart Vlaar for communications and interest in this work. P.B.  is supported by C.N.R.S. 
\vspace{0.2cm}

\providecommand{\bysame}{\leavevmode\hbox to3em{\hrulefill}\thinspace}

\end{document}